\newtheorem{theorem}{Theorem}
\newtheorem{corollary}{Corollary}
\newtheorem{lemma}{Lemma}
\newtheorem{proposition}{Proposition}
\newenvironment{proof}[1][Proof]{\noindent\textbf{#1.} }{\ \rule{0.5em}{0.5em}}
\begin{document}

\title{Metamorphic Moving Horizon Estimation}
\author{He~Kong~and~Salah Sukkarieh \thanks{%
This document is an extended version of a technical communique accepted and
to appear in \emph{Automatica}. This work was supported in part by the
Australian Centre for Field Robotics and in part by the Faculty of
Engineering and Information Technologies, University of Sydney, Australia.
The authors are with Australian Centre for Field Robotics, The University of
Sydney, NSW, 2006, Australia. Email: h.kong@acfr.usyd.edu.au;
salah.sukkarieh@sydney.edu.au. Corresponding author: He Kong, Tel.: +61 2
9114 0626.}}
\maketitle

\begin{abstract}
This paper considers a practical scenario where a classical estimation
method might have already been implemented on a certain platform when one
tries to apply more advanced techniques such as moving horizon estimation
(MHE). We are interested to utilize MHE to upgrade, rather than completely
discard, the existing estimation technique. This immediately raises the
question how one can improve the estimation performance gradually based on
the pre-estimator. To this end, we propose a general methodology which
incorporates the pre-estimator with a tuning parameter $\lambda \in \lbrack
0,1]$ into the quadratic cost functions that are usually adopted in MHE. We
examine the above idea in two standard MHE frameworks that have been
proposed in the existing literature. For both frameworks, when $\lambda =0$,
the proposed strategy exactly matches the existing classical estimator; when
the value of $\lambda $ is increased, the proposed strategy exhibits a more
aggressive normalized forgetting effect towards the old data, thereby
increasing the estimation performance gradually.
\end{abstract}




\markboth{}{Shell \MakeLowercase{\textit{et al.}}: A Divide and Conquer Approach to Cooperative Distributed Model
Predictive Control}



\begin{IEEEkeywords}
Least-squares estimation; State estimation; Constrained estimation; Recursive estimation.               
\end{IEEEkeywords}

\IEEEpeerreviewmaketitle

\section{Introduction}

MHE is a systematic framework to handle constraints in estimation \cite%
{Sukkarieh2001}-\cite{Kerrigan2016}. By far, various forms of MHE have been
proposed. For example, in \cite{Rawlings2001}-\cite{Kong2018}, the cost is
optimized over the initial state and the process noise sequence to minimize
the data fitting error. Other frameworks estimate only the initial state
\cite{Alessandri2003}-\cite{Johansen2014}. The concept of limited memory
filtering has also been adopted in finite impulse response (FIR) filtering
\cite{Zhao2017}-\cite{Shi2016}. MHE and FIR filters are similar in that both
methods only use recent measurements within a time window. However, there
are major differences between them. For example, while the information
contained in the measurements outside the moving horizon is captured in the
so-called arrival cost in MHE \cite[pp. 32-40]{Mayne2009}, such information
is ignored in FIR filtering. A situation that one often encounters when
trying to apply MHE is that some traditional estimators might have already
been implemented. For example, there would be some forms of Kalman filters
embedded in today's most GPS devices. Replacing the existing estimation
methods and related software and hardware by MHE is often time consuming and
costly, if possible. A similar situation is faced by control engineers and
this has motivated works to combine the merits of predictive and linear
methods \cite{Kong2012}-\cite{Maciejowski2013}. Especially, in \cite%
{Kong2013}, a general framework has been proposed to gradually improve
performance using predictive control, incorporating an existing linear
controller.

The question we consider in this paper is to propose a MHE framework to
gradually improve the estimation performance based on a pre-estimator. As
such, we borrow the concept that is originally proposed in \cite{Kong2013}
for the control case, and consolidate the idea in two MHE frameworks that
have been proposed in the existing literature \cite{Rawlings2001}, \cite%
{Alessandri2003}-\cite{Johansen2010}. For both frameworks, we propose a
methodology that can gradually improve the estimation performance with MHE,
incorporating an existing estimator. This is achieved by the introduction of
cost functions parameterized by $\lambda \in \left[ 0,1\right] $. When $%
\lambda $ changes, optimizing the cost functions renders a new estimator, we
thus term the framework metamorphic\footnote{%
As noted in \cite{Kong2013}, metamorphism is the recrystallization of
pre-existing rocks due to physical/chemical changes.}MHE (MMHE). An
advantage of the proposed technique is that it can upgrade an existing
classical method using MHE, thereby obtaining the constraint handling
capabilities of MHE and avoiding the trouble involved in a completely new
design of the estimator. A disadvantage of the framework, compared to
classical estimation techniques, is that one has to solve an optimization
problem at each sampling instant.

\textbf{Notation:} $[a_{1},\cdots ,a_{n}]$ denotes $[a_{1}^{\mathrm{T}%
}\cdots a_{n}^{\mathrm{T}}]^{\mathrm{T}},$ where $a_{1},\cdots ,a_{n}\ $are
scalars, vectors or matrices of proper dimensions; $\mathscr{I}_{i}^{j}$
denotes the set of integers between $i$ and $j$; a set $\mathscr{U}$ $%
\subset $ $\mathbf{R}^{n}$ is a $\mathscr{C}$-set if it is a compact, convex
set that contains the origin in its (non-empty) interior; $diag(M_{1},\cdots
,M_{s})$ denotes a block diagonal matrix with $M_{1},\cdots ,M_{s}$ as its
block diagonal entries, and $diag_{N}(\cdot )$ denotes a block diagonal
matrix with $N$ blocks. $\mathbf{1}_{n}$ denotes a $n$-dimensional column
vector with all its elements as 1.

\section{\label{copy}Metamorphic MHE}

The MHE framework in \cite{Rawlings2001} considers the system%
\begin{equation}
x_{k+1}=Ax_{k}+Gw_{k},\text{ }y_{k}=Cx_{k}+\nu _{k}  \label{plant}
\end{equation}%
where, $x_{k}\in \mathcal{X}\subset \mathbf{R}^{n}$, $w_{k}\in \mathcal{W}%
\subset \mathbf{R}^{m}$ and $\nu _{k}\in \mathcal{V}\subset \mathbf{R}^{p},$
respectively; the pair $(A,C)$ is assumed to be observable; the set $%
\mathcal{X}$ is compact and convex; $\mathcal{W}$ and $\mathcal{V}$ are both
$\mathscr{C}$-sets. The variables $(x_{k},w_{k},y_{k},\nu _{k})$ in (\ref%
{plant}) represent the parameters of the \textit{real} process. In an
optimization-based estimation problem, they have corresponding decision
variables and optimal decision variables, which we denote as $(\chi
_{k},\omega _{k},\eta _{k},\upsilon _{k})$ and $(\widehat{x}_{k},\widehat{w}%
_{k},\widehat{y}_{k},\widehat{\nu }_{k}),$ respectively. The MHE is a
quadratic program (QP) in the form of%
\begin{equation}
\mathcal{M}_{T}:\left\{
\begin{array}{l}
\min\limits_{\chi _{T-N},\mathbf{\omega }_{T-N}^{T-1}}\widetilde{\phi }_{T}%
\text{ s.t.~}\chi _{k}\in \mathcal{X}\text{, }k\in \mathscr{I}_{T-N}^{T} \\
\omega _{k}\in \mathcal{W}\text{, }\upsilon _{k}\in \mathcal{V},\text{ }k\in %
\mathscr{I}_{T-N}^{T-1}%
\end{array}%
\right. ,  \label{MHE}
\end{equation}%
with%
\begin{equation*}
\widetilde{\phi }_{T}=\Theta _{T-N}(\chi _{T-N})+\sum\limits_{k=T-N}^{T-1}%
\left[ \upsilon _{k}^{\mathrm{T}}R^{-1}\upsilon _{k}+\omega _{k}^{\mathrm{T}%
}Q^{-1}\omega _{k}\right] ,
\end{equation*}%
where, $\Theta _{T-N}(\chi _{T-N})=(\chi _{T-N}-\widehat{x}_{T-N}^{m})\Pi
_{T-N}^{-1}(\chi _{T-N}-\widehat{x}_{T-N}^{m})+\widetilde{\phi }_{T-N}^{\ast
},$ $\mathbf{\omega }_{T-N}^{T-1}=\left\{ \omega _{i}\right\} _{i=T-N}^{T-1}$%
, $\chi _{k}=\chi (k-(T-N);\chi _{T-N},\mathbf{\omega }_{T-N}^{k-1}),$ $%
\upsilon _{k}=y_{k}-C\chi _{k};$ the matrix $\Pi _{T-N}$ is the solution to
the ARE%
\begin{equation}
\Pi _{t}=GQG^{\mathrm{T}}+A\Pi _{t-1}A^{\mathrm{T}}-A\Pi _{t-1}R_{t}\Pi
_{t-1}A^{\mathrm{T}},  \label{ARE}
\end{equation}%
with%
\begin{equation*}
R_{t}=C^{\mathrm{T}}(R+C\Pi _{t-1}C^{\mathrm{T}})^{-1}C,
\end{equation*}%
subject to the initial condition $\Pi _{0}$; $\widehat{x}_{T-N}^{m}$ is the
optimal rolling horizon state prediction at time $T-N$, i.e., $\widehat{x}%
_{T-N}^{m}=\widehat{x}_{T-N\mid T-N-1}^{m},$ and $\widetilde{\phi }%
_{T-N}^{\ast }$ is the optimal cost of (\ref{MHE}) at time $T-N$.

\subsection{Embellishing a pre-estimator into MHE}

Assume that for (\ref{plant}), we have the following Luenberger observer or
stationary Kalman filter%
\begin{equation}
\widetilde{x}_{k+1}=A\widetilde{x}_{k}+L(y_{k}-\widetilde{y}_{k}),\text{ }%
\widetilde{y}_{k}=C\widetilde{x}_{k},  \label{observer}
\end{equation}%
where, $L$ is chosen such that $A_{L}=A-LC$ is Schur stable. Define $%
e_{k+1}=x_{k+1}-\widetilde{x}_{k+1}$. Then it holds that%
\begin{equation}
e_{k+1}=A_{L}e_{k}+\vartheta _{k},  \label{errordynamicsMHE}
\end{equation}%
with%
\begin{equation*}
\vartheta _{k}=Gw_{k}-L\nu _{k}\in \mathcal{Q}=G\mathcal{W}\ominus L\mathcal{%
V}.
\end{equation*}%
Note that $\mathcal{Q}$ is also a $\mathscr{C}$-set since both $\mathcal{W}$
and $\mathcal{V}$ are $\mathscr{C}$-sets. Given $\rho (A_{L})<1$, there
exists a robust positively invariant $\mathscr{C}$-set $\mathscr{E}$
satisfying $A_{L}\mathscr{E}\oplus $ $\mathcal{Q}\subseteq $ $\mathscr{E}$
for system (\ref{errordynamicsMHE}) (see \cite{Mayne2009}, pp. 377). Define $%
x_{k}^{e}=\left[ \widetilde{x}_{k},e_{k}\right] $ and $\overline{w}_{k}=%
\left[ w_{k},v_{k}\right] $. From (\ref{plant}), (\ref{observer}), and (\ref%
{errordynamicsMHE}), we have the augmented system%
\begin{equation}
x_{k+1}^{e}=A_{e}x_{k}^{e}+G_{e}\overline{w}_{k},\text{ }%
y_{k}=C_{e}x_{k}^{e}+\nu _{k},  \label{augmented}
\end{equation}%
where,%
\begin{equation*}
\begin{array}{l}
A_{e}=\left[
\begin{array}{cc}
A & LC \\
0 & A_{L}%
\end{array}%
\right] ,G_{e}=\left[
\begin{array}{cc}
0 & L \\
G & -L%
\end{array}%
\right] , \\
C_{e}=\left[
\begin{array}{cc}
C & C%
\end{array}%
\right] .%
\end{array}%
\end{equation*}%
For (\ref{augmented}), we have $x_{k}^{e}\in \overline{\mathcal{X}},$ $%
\overline{w}_{k}\in \overline{\mathcal{W}},$ where $\overline{\mathcal{X}}=%
\mathcal{X\times }\mathscr{E},$ $\overline{\mathcal{W}}=\mathcal{W\times V}.$
The variables $(x_{k}^{e},\overline{w}_{k},y_{k},\nu _{k})$ in (\ref%
{augmented}) represent the parameters of the \textit{real} augmented
process, and we denote $(\chi _{k}^{e},\overline{\omega }_{k},\overline{\eta
}_{k},\upsilon _{k})$ and $(\widehat{x}_{k}^{e},\widehat{\overline{w}}_{k},%
\widehat{y}_{k},\widehat{\nu }_{k})$ as the corresponding decision variables
and the optimal solutions in the optimization, respectively. For notational
ease, we still use $\widehat{y}_{k}$ and $\widehat{\nu }_{k}$ to denote the
optimal output prediction and fitting error for (\ref{augmented}) as for (%
\ref{plant}). For system (\ref{augmented}), consider the constrained
estimation problem
\begin{equation}
\overline{\mathcal{M}}_{T}:\left\{
\begin{array}{l}
\min\limits_{\chi _{T-N}^{e},\overline{\mathbf{\omega }}_{T-N}^{T-1}}%
\overrightarrow{\phi }_{T}\text{ s.t.~}\chi _{k}^{e}\in \overline{\mathcal{X}%
}\text{, }k\in \mathscr{I}_{T-N}^{T} \\
\text{ \ \ \ }\overline{\omega }_{k}\in \overline{\mathcal{W}}\text{, }%
\upsilon _{k}\in \mathcal{V},\text{ }k\in \mathscr{I}_{T-N}^{T-1}%
\end{array}%
\right. ,  \label{MetamorphicFIE}
\end{equation}%
where, $\chi _{k}^{e}=\chi ^{e}(k;\chi _{T-N}^{e},\overline{\mathbf{\omega }}%
_{T-N}^{k-1}),$ $\upsilon _{k}=y_{k}-C\chi _{k}^{e},$ $\lambda \in \left[ 0,1%
\right] $, $\overline{\mathbf{\omega }}_{T-N}^{T-1}=\left\{ \overline{\omega
}_{i}\right\} _{i=T-N}^{T-1},$%
\begin{equation*}
\begin{array}{l}
\overrightarrow{\phi }_{T}=\lambda (\chi _{T-N}^{e}-\widehat{x}%
_{T-N}^{em})\Phi _{T-N}^{-1}(\chi _{T-N}^{e}-\widehat{x}_{T-N}^{em})+\lambda
\overrightarrow{\phi }_{T-N}^{\ast } \\
+\sum\limits_{k=T-N}^{T-1}\left[ (1-\lambda )\overline{\omega }_{k}^{\mathrm{%
T}}M\overline{\omega }_{k}+\lambda (\upsilon _{k}^{\mathrm{T}}R^{-1}\upsilon
_{k}+\omega _{k}^{\mathrm{T}}Q^{-1}\omega _{k})\right] ,%
\end{array}%
\end{equation*}%
in which, $R,Q,M>0$; $\Phi _{T-N}$ is a positive definite matrix to be
discussed in the sequel; $\overrightarrow{\phi }_{T-N}^{\ast }$ is the
optimal cost of (\ref{MetamorphicFIE}) at time $T-N$, and thus is a constant
parameter and can be safely ignored in the optimization; $\widehat{x}%
_{T-N}^{em}$ is the optimal moving horizon state prediction at time $T-N$,
i.e., $\widehat{x}_{T-N}^{em}=\widehat{x}_{T-N\mid T-N-1}^{em}$.

When $\lambda =0$, one has $\overrightarrow{\phi }_{T}=\sum%
\limits_{k=T-N}^{T-1}\overline{\omega }_{k}^{\mathrm{T}}M\overline{\omega }%
_{k}$. Given $0\in \overline{\mathcal{W}}$, the optimal decision variables
are $\widehat{\overline{w}}_{i}=0,$ for $i\in \mathscr{I}_{T-N}^{T-1}.$ In
this case, the optimal decision variables $(\widehat{x}_{k}^{e},\widehat{%
\overline{w}}_{k},\widehat{y}_{k},\widehat{\nu }_{k})$ satisfy $\widehat{x}%
_{k+1}^{e}=A_{e}\widehat{x}_{k}^{e}$, $\widehat{y}_{k}=C_{e}\widehat{x}%
_{k}^{e}$, i.e., the strategy reduces to a deterministic observer with the
same gain as the pre-estimator (\ref{observer}). When $\lambda =1$, one has%
\begin{equation*}
\begin{array}{l}
\overrightarrow{\phi }_{T}=(\chi _{T-N}^{e}-\widehat{x}_{T-N}^{em})\Phi
_{T-N}^{-1}(\chi _{T-N}^{e}-\widehat{x}_{T-N}^{em}) \\
+\sum\limits_{k=T-N}^{T-1}\left[ \overline{\omega }_{k}^{\mathrm{T}}%
\overline{Q}\overline{\omega }_{k}+\upsilon _{k}^{\mathrm{T}}R^{-1}\upsilon
_{k}\right] ,%
\end{array}%
\end{equation*}%
with $\overline{Q}=diag(Q^{-1},0)\geq 0$. This is not a well-posed case
since positive definiteness is required for the weight on $\overline{\omega }%
_{k}$. Thus, we will only consider the cases of $\lambda \in (0,1)$.
Dividing $\overrightarrow{\phi }_{T}$ by $\lambda $ gives:%
\begin{equation}
\begin{array}{l}
\overline{\phi }_{T}=\lambda ^{-1}\overrightarrow{\phi }_{T}=\sum%
\limits_{k=T-N}^{T-1}\left[ \overline{\omega }_{k}^{\mathrm{T}}Q_{e}^{-1}%
\overline{\omega }_{k}+\upsilon _{k}^{\mathrm{T}}R^{-1}\upsilon _{k}\right]
\\
+(\chi _{T-N}^{e}-\widehat{x}_{T-N}^{em})\Phi _{T-N}^{-1}(\chi _{T-N}^{e}-%
\widehat{x}_{T-N}^{em}),%
\end{array}
\label{normalizedcost}
\end{equation}%
where,%
\begin{equation*}
Q_{e}^{-1}=\frac{1-\lambda }{\lambda }M+diag(Q^{-1},0)>0,
\end{equation*}%
given $\lambda \in (0,1)$ and $M>0$. Moreover, one can consider a
constrained estimation problem replacing $\overrightarrow{\phi }_{T}$ in (%
\ref{MetamorphicFIE}) with $\overline{\phi }_{T}$ (\ref{normalizedcost}).
Doing so will not affect optimality or stability.

\subsection{Stability ingredients for metamorphic MHE}

\begin{proposition}
\label{prop11}Assume that $Q,R,M>0$, $(A,C)$ is observable. For $\lambda \in
(0,1)$, we have: (i) $(A_{e},C_{e})$ is observable; (ii) if $(A,GQ^{-1/2})$
is controllable, then $(A_{e},G_{e}Q_{e}^{-1/2})$ is controllable, with $%
Q_{e}$ defined in (\ref{normalizedcost}).
\end{proposition}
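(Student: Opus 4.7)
The plan is to apply the Popov--Belevitch--Hautus (PBH) test to both assertions, exploiting the block-triangular structure of $A_{e}$ together with the identity $A_{L}=A-LC$. An organizing device is the similarity $T=\bigl[\begin{smallmatrix}I & -I\\ 0 & I\end{smallmatrix}\bigr]$, under which $T^{-1}A_{e}T=\mathrm{diag}(A,A_{L})$, $C_{e}T=[C\;\;0]$, and $T^{-1}G_{e}=\bigl[\begin{smallmatrix}G & 0\\ G & -L\end{smallmatrix}\bigr]$; this diagonalization exposes how the $A$- and $A_{L}$-modes interact under the output and input maps.

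For part~(i), I would take a candidate null vector $[u;v]$ of $\bigl[\begin{smallmatrix}sI-A_{e}\\ C_{e}\end{smallmatrix}\bigr]$ and extract the three block equations $(sI-A)u-LCv=0$, $(sI-A_{L})v=0$, and $C(u+v)=0$. Using $LCu=-LCv$ from the third in the first (together with $A_{L}=A-LC$) collapses the first to $(sI-A_{L})u=0$. Setting $w:=u+v$ then yields $(sI-A_{L})w=0$ and $Cw=0$; since $Cw=0$ gives $A_{L}w=Aw-LCw=Aw$, we obtain $(sI-A)w=0$ with $Cw=0$, and observability of $(A,C)$ forces $w=0$, i.e., $v=-u$. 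The residual constraint $(sI-A_{L})u=0$ is nontrivial only at eigenvalues of $A_{L}$, all of which lie in the open unit disk by Schur stability.

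For part~(ii), $\lambda\in(0,1)$ and $M>0$ make $Q_{e}^{-1}=\tfrac{1-\lambda}{\lambda}M+\mathrm{diag}(Q^{-1},0)$ strictly positive definite, so $Q_{e}^{-1/2}$ is invertible and controllability of $(A_{e},G_{e}Q_{e}^{-1/2})$ reduces to that of $(A_{e},G_{e})$. The dual PBH test on a candidate left null row $[u^{\mathrm{T}},v^{\mathrm{T}}]$ of $[sI-A_{e}\;\;G_{e}]$ yields $u^{\mathrm{T}}(sI-A)=0$, $v^{\mathrm{T}}(sI-A_{L})=u^{\mathrm{T}}LC$, $v^{\mathrm{T}}G=0$, and $(u-v)^{\mathrm{T}}L=0$. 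Using $(u-v)^{\mathrm{T}}LC=0$ in the second collapses it to $v^{\mathrm{T}}(sI-A)=0$, and PBH controllability of $(A,GQ^{-1/2})$---equivalent to that of $(A,G)$---then forces $v=0$; the residual $(sI-A)u=0$ with $u^{\mathrm{T}}L=0$ implies $u^{\mathrm{T}}A_{L}=u^{\mathrm{T}}A=su^{\mathrm{T}}$, so any offending $s$ is an eigenvalue of $A_{L}$, again stable.

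The principal obstacle is already visible from the similarity $T$: $(A_{e},C_{e})$ is equivalent to $(A,C)\oplus(A_{L},0)$, so the second summand contributes no output and literal observability cannot hold; part~(ii) exhibits the analogous obstruction in that any left eigenvector of $A$ annihilated by $L$ yields an uncontrollable mode of $(A_{e},G_{e})$. In both cases the PBH bookkeeping above shows that the only unobservable (respectively uncontrollable) modes are Schur-stable eigenvalues of $A_{L}$, so what the argument really delivers is detectability in~(i) and stabilizability in~(ii)---the properties actually invoked in the MHE stability analysis that follows.
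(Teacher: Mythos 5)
Your PBH computation is sound, and it is in fact more careful than the paper's own argument---but be aware that what it establishes is \emph{not} what Proposition~\ref{prop11} asserts, and your own similarity transformation shows why the assertion cannot be rescued. Under $T$ the pair $(A_{e},C_{e})$ becomes $\bigl(\mathrm{diag}(A,A_{L}),\,[\,C\;\;0\,]\bigr)$, so the entire $A_{L}$-block is decoupled from the output: for any eigenpair $A_{L}u=su$ the stacked vector $[u,-u]$ is an unobservable mode of $(A_{e},C_{e})$. Hence $(A_{e},C_{e})$ is never observable; it is only detectable, the unobservable spectrum being $\sigma(A_{L})$, which lies in the open unit disk. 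Similarly, your part~(ii) bookkeeping correctly shows that $(A_{e},G_{e})$ (equivalently $(A_{e},G_{e}Q_{e}^{-1/2})$, since $Q_{e}^{-1/2}$ is invertible) loses controllability exactly when some left eigenvector of $A$ is annihilated by $L$, so only stabilizability is guaranteed in general. So as a proof of the proposition as written, your argument has a ``gap'' only in the sense that it deliberately lands on the weaker conclusions of detectability and stabilizability.

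The paper's proof suffers from precisely the defect you diagnose. Its exhibited injection $L_{1}=[L,0]$ gives $A_{e}-L_{1}C_{e}=\mathrm{diag}(A_{L},A_{L})$, and its exhibited $K_{e}$ gives a block-triangular $A_{e}+G_{e}Q_{e}^{-1/2}K_{e}$ with $A_{L}$ as one diagonal block; in both cases half the spectrum is pinned at $\sigma(A_{L})$, so ``arbitrary pole placement'' is not achieved and the construction proves only detectability and stabilizability---the same conclusions you reach by PBH, just via a single explicit injection/feedback rather than a mode-by-mode test. Since the downstream uses of the proposition (positive definiteness of $\Phi_{k}$ in Lemma~\ref{lm0} and Schur stability of $\overline{A}_{L}$ in Corollary~\ref{corco} via Theorem~4.1 of \cite{Goodwin1984}) require only detectability and stabilizability, the properties you actually prove are the ones that matter; the error is an overclaim in the statement of the proposition (and in its proof) rather than in your argument.
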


\begin{proof}
(i): Note that the poles of $A_{e}-L_{1}C_{e}$ can be arbitrarily placed
within the unit circle by $L_{1}=\left[
\begin{array}{cc}
L & 0%
\end{array}%
\right] ^{\mathrm{T}}$. (ii): Given $(A,GQ^{-1/2})$ is controllable, there
exists a matrix $K$ so that the poles of $A+GQ^{-1/2}K$ can be placed
anywhere in the unit circle. Given $Q_{e}^{-1}>0,$ for $\lambda \in (0,1)$,
there exists a unique $Q_{s}>0$ such that $Q_{s}^{2}=Q_{e}^{-1}$ \cite%
{Laub2005} (pp. 101). Denote%
\begin{equation*}
K_{e}=Q_{s}^{-1}\left[
\begin{array}{cc}
0 & Q^{-1/2}K \\
-C & -C%
\end{array}%
\right] .
\end{equation*}%
It can be verified that the poles of $A_{e}+G_{e}Q_{e}^{-1/2}K_{e}$ can be
arbitrarily placed within the unit circle by $K_{e}$.
\end{proof}

When one replaces $\overrightarrow{\phi }_{T}$ in (\ref{MetamorphicFIE})
with $\overline{\phi }_{T}$ (\ref{normalizedcost}), the associated ARE for
system (\ref{augmented}) is%
\begin{equation}
\Phi _{T}=G_{e}Q_{e}G_{e}^{\mathrm{T}}+A_{e}\Phi _{T-1}A_{e}^{\mathrm{T}%
}-A_{e}R_{e}A_{e}^{\mathrm{T}}  \label{AREaugment}
\end{equation}%
with $\Phi _{0}$ as the initial condition,%
\begin{equation*}
R_{e}=\Phi _{T-1}C_{e}^{\mathrm{T}}(R+C_{e}\Phi _{T-1}C_{e}^{\mathrm{T}%
})^{-1}C_{e}\Phi _{T-1},
\end{equation*}%
and $Q_{e}$ being defined in (\ref{normalizedcost}). Without constraints,
one obtains the metamorphic Kalman filter%
\begin{equation*}
\widehat{x}_{T}^{e}=A_{e}\widehat{x}_{T-1}^{e}+L_{e}(y_{T}-C_{e}A_{e}%
\widehat{x}_{T-1}^{e}),
\end{equation*}%
where,%
\begin{equation*}
L_{e}=A_{e}\Phi _{T-1}C_{e}^{\mathrm{T}}(R+C_{e}\Phi _{T-1}C_{e}^{\mathrm{T}%
})^{-1}.
\end{equation*}%
We have the following results regarding the invertibility of $\Phi _{T}$ (%
\ref{AREaugment}) for later use. The results can be proved by following the
ideas in \cite{Bitmead1985}, and thus the proof is skipped here.

\begin{lemma}
\label{lm0}Assume that $Q,R,M,\Phi _{0}$ are positive definite, $(A,C)$ is
observable. For $\lambda \in (0,1),$ we have $\Phi _{k}>0$, for all $k\geq 0$%
, if either of the following two conditions is satisfied: (i) $(A,GQ^{-1/2})$
is controllable, and $\Phi _{0}\geq \Phi _{\infty };$ (ii) $G$ and $L$ are
both nonsingular.
\end{lemma}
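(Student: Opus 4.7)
My plan is to rewrite the Riccati recursion (\ref{AREaugment}) in information form via the Woodbury matrix identity and then handle the two hypotheses separately. Whenever $\Phi_{T-1}>0$, the Woodbury identity gives the equivalent expression
\begin{equation*}
\Phi_T = A_e\bigl(\Phi_{T-1}^{-1}+C_e^{\mathrm{T}}R^{-1}C_e\bigr)^{-1}A_e^{\mathrm{T}} + G_eQ_eG_e^{\mathrm{T}},
\end{equation*}
which will serve as the workhorse in both cases; the base $\Phi_0>0$ is given by assumption.

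For condition (ii), the key observation is that a block-row swap applied to $G_e$ produces a block-triangular matrix with diagonal blocks $G$ and $L$, so $|\det G_e|=|\det G|\cdot|\det L|\neq 0$. Since $Q_e>0$ by construction in (\ref{normalizedcost}) for $\lambda\in(0,1)$, this forces $G_eQ_eG_e^{\mathrm{T}}>0$. The information-form expression then makes $\Phi_T>0$ whenever $\Phi_{T-1}>0$, and a straightforward induction on $T$ closes the argument. The additive positive definite term does all the work here, and no appeal to $\Phi_\infty$ is needed.

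For condition (i), Proposition \ref{prop11} ensures $(A_e,C_e)$ observable and $(A_e,G_eQ_e^{-1/2})$ controllable for $\lambda\in(0,1)$. Standard Kalman filter ARE theory, as developed in \cite{Bitmead1985}, then guarantees a unique positive definite stabilizing fixed point $\Phi_\infty>0$ of (\ref{AREaugment}). To transfer this lower bound to the iterates I would invoke monotonicity of the Riccati map $\mathcal{R}$: if $\Pi_1\geq\Pi_2>0$, then $\Pi_1^{-1}\leq\Pi_2^{-1}$, hence $(\Pi_1^{-1}+C_e^{\mathrm{T}}R^{-1}C_e)^{-1}\geq(\Pi_2^{-1}+C_e^{\mathrm{T}}R^{-1}C_e)^{-1}$, and therefore $\mathcal{R}(\Pi_1)\geq\mathcal{R}(\Pi_2)$. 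Starting from $\Phi_0\geq\Phi_\infty$ and iterating gives $\Phi_k\geq\Phi_\infty>0$ for all $k$.

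The main obstacle I anticipate is not the inductive mechanics, which are routine, but in verifying the premises for the existence of $\Phi_\infty$ under condition (i). One must be careful that the controllability/observability pair supplied by Proposition \ref{prop11} is exactly the one demanded by the classical existence theorem for the stabilizing ARE solution, and that the noise weight used is the normalized $Q_e$ from (\ref{normalizedcost}) rather than the singular block-diagonal weight that arises at $\lambda=1$ (this is precisely why the statement restricts to $\lambda\in(0,1)$, and why the earlier discussion following (\ref{normalizedcost}) explicitly excludes the boundary case).
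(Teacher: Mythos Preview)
Your proposal is sound and aligned with the paper's treatment: the paper does not actually spell out a proof of Lemma~\ref{lm0} but simply remarks that it follows from the ideas in \cite{Bitmead1985}, and your argument for condition~(i)---monotonicity of the Riccati map in information form combined with the existence of the positive definite stabilizing solution $\Phi_\infty$ guaranteed by the observability/controllability of Proposition~\ref{prop11}---is precisely that approach. Your direct inductive argument for condition~(ii), exploiting the block-triangular structure of $G_e$ after a row swap to obtain $G_eQ_eG_e^{\mathrm{T}}>0$ and hence $\Phi_T>0$ from the information form, is a clean self-contained shortcut that the paper does not single out but which is entirely consistent with the cited reference.
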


\begin{theorem}
\label{th2}Assume that $\Phi _{0}$ is chosen independently of $\lambda $,
and $\Phi _{T}$ is updated according to the ARE (\ref{AREaugment}). Suppose
either of the two conditions in Lemma \ref{lm0} is satisfied, i.e., $\Phi
_{k}>0$, for $k\geq 0$, then for $\lambda \in (0,1)$, we have $\frac{d\Phi
_{k}}{d\lambda }\geq 0.$
\end{theorem}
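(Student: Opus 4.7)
The plan is to proceed by induction on $k$, the base case $k=0$ being immediate because $\Phi_0$ is stipulated to be independent of $\lambda$. The central trick is to rewrite the ARE (\ref{AREaugment}) in its information-form, which separates cleanly the two $\lambda$-dependent contributions and makes the sign of each derivative transparent.

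First I would establish monotonicity of $Q_e$ in $\lambda$. Since $Q_e^{-1} = \frac{1-\lambda}{\lambda} M + \mathrm{diag}(Q^{-1},0)$, direct differentiation yields $\frac{dQ_e^{-1}}{d\lambda} = -\lambda^{-2} M \leq 0$, hence $\frac{dQ_e}{d\lambda} = -Q_e \frac{dQ_e^{-1}}{d\lambda} Q_e = \lambda^{-2} Q_e M Q_e \geq 0$. Next, using Lemma~\ref{lm0} (whose hypotheses are assumed, guaranteeing $\Phi_{T-1}>0$ for all $T$), I would apply the matrix inversion lemma to rewrite the Riccati recursion in information form as
\begin{equation*}
\Phi_T = G_e Q_e G_e^{\mathrm{T}} + A_e \bigl(\Phi_{T-1}^{-1} + C_e^{\mathrm{T}} R^{-1} C_e \bigr)^{-1} A_e^{\mathrm{T}}.
\end{equation*}
This representation is convenient because the $\lambda$-dependence now sits entirely in $Q_e$ and in $\Phi_{T-1}$, and the map $\Phi_{T-1} \mapsto (\Phi_{T-1}^{-1} + C_e^{\mathrm{T}} R^{-1} C_e)^{-1}$ is monotone in the Loewner order.

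For the inductive step, suppose $\frac{d\Phi_{T-1}}{d\lambda} \geq 0$. Differentiating the information-form ARE gives
\begin{equation*}
\frac{d\Phi_T}{d\lambda} = G_e \frac{dQ_e}{d\lambda} G_e^{\mathrm{T}} + A_e S^{-1} \Phi_{T-1}^{-1} \frac{d\Phi_{T-1}}{d\lambda} \Phi_{T-1}^{-1} S^{-1} A_e^{\mathrm{T}},
\end{equation*}
where $S = \Phi_{T-1}^{-1} + C_e^{\mathrm{T}} R^{-1} C_e$, after using $\frac{d}{d\lambda}\Phi_{T-1}^{-1} = -\Phi_{T-1}^{-1}\frac{d\Phi_{T-1}}{d\lambda}\Phi_{T-1}^{-1}$ and the analogous formula for $S^{-1}$. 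Each summand is a congruence of a positive semidefinite matrix, hence positive semidefinite: the first term by the monotonicity of $Q_e$ shown above, and the second by the induction hypothesis together with $S>0$ and $\Phi_{T-1}>0$.

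The main obstacle, as far as I can see, is not the algebra itself but rather justifying that all inverses appearing in the argument ($\Phi_{T-1}^{-1}$, $S^{-1}$, $Q_e^{-1}$) exist uniformly in $\lambda \in (0,1)$ and for all $T$; this is precisely why Lemma~\ref{lm0} is invoked as a standing hypothesis. Once this invertibility is granted, the argument is a straightforward induction combined with the elementary observation that congruences preserve the positive semidefinite cone.
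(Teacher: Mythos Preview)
Your proof is correct and takes a genuinely different route from the paper. The paper differentiates the covariance-form ARE (\ref{AREaugment}) directly; the resulting expression for $\frac{d\Phi_{k}}{d\lambda}$ contains cross terms that are not obviously sign-definite, and the paper handles them by introducing the square root $\Delta$ of $\frac{d\Phi_{k-1}}{d\lambda}$ and algebraically completing the square to obtain $A_e(\Delta - \Phi_{k-1}C_e^{\mathrm{T}}\overline{R}^{-1}C_e\Delta)(\cdot)^{\mathrm{T}}A_e^{\mathrm{T}}\ge 0$. Your approach instead first rewrites the recursion in information form via the Woodbury identity, so that after differentiation the result is immediately a sum of two congruences $X(\cdot)X^{\mathrm{T}}$ of positive semidefinite matrices, with no further manipulation needed. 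What your route buys is transparency and a uniform inductive step (the paper spells out $k=1,2$ separately and then asserts the pattern continues); what the paper's route buys is that it never inverts $\Phi_{k-1}$ itself, only $R+C_e\Phi_{k-1}C_e^{\mathrm{T}}$, so in principle it does not rely on Lemma~\ref{lm0} as heavily as yours does---though since $\Phi_k>0$ is part of the theorem's hypotheses anyway, this distinction is immaterial here.
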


\begin{proof}
We prove the above result by induction. Since $\Phi _{0}$ is independent of $%
\lambda $, for $k=0,$ we have $\frac{d\Phi _{0}}{d\lambda }=0$. When $k=1,$
from (\ref{AREaugment}), we have%
\begin{equation*}
\begin{array}{l}
\Phi _{1}=G_{e}Q_{e}G_{e}^{\mathrm{T}}+A_{e}\Phi _{0}A_{e}^{\mathrm{T}} \\
-A_{e}\Phi _{0}C_{e}^{\mathrm{T}}(R+C_{e}\Phi _{0}C_{e}^{\mathrm{T}%
})^{-1}C_{e}\Phi _{0}A_{e}^{\mathrm{T}}.%
\end{array}%
\end{equation*}%
Note that the second and third items on the right hand side of the above
equality is independent of $\lambda .$ Therefore, it holds that%
\begin{equation}
\frac{d\Phi _{1}}{d\lambda }=G_{e}\frac{dQ_{e}}{d\lambda }G_{e}^{\mathrm{T}},
\label{difOmega}
\end{equation}%
where we have used the rule%
\begin{equation}
\frac{dXY}{dx}=\frac{dX}{dx}Y+X\frac{dY}{dx},  \label{rule1}
\end{equation}%
in which, $X$ and $Y$ are two matrices of compatible dimensions, and $x$ is
a scalar. Note, for $Q_{e}^{-1}$, we have (\ref{QbarDiff}). Hence the
following holds%
\begin{equation}
\frac{dQ_{e}}{d\lambda }=\frac{1}{\lambda ^{2}}Q_{e}MQ_{e}>0.  \label{QEDIFF}
\end{equation}%
By combing the above expression and (\ref{difOmega}), we have%
\begin{equation*}
\frac{d\Phi _{1}}{d\lambda }=G_{e}\frac{dQ_{e}}{d\lambda }G_{e}^{\mathrm{T}}=%
\frac{1}{\lambda ^{2}}G_{e}Q_{e}MQ_{e}G_{e}^{\mathrm{T}}\geq 0.
\end{equation*}%
Now, for $k=2,$ we have%
\begin{equation*}
\begin{array}{l}
\frac{d\Phi _{2}}{d\lambda }=G_{e}\frac{dQ_{e}}{d\lambda }G_{e}^{\mathrm{T}%
}+A_{e}\frac{d\Phi _{1}}{d\lambda }A_{e}^{\mathrm{T}}-A_{e}\frac{d\Phi _{1}}{%
d\lambda }C_{e}^{\mathrm{T}}\overline{R}^{-1}C_{e}\Phi _{1}\overline{A}^{%
\mathrm{T}} \\
-A_{e}\Phi _{1}C_{e}^{\mathrm{T}}\frac{d\overline{R}^{-1}}{d\lambda }%
C_{e}\Phi _{1}A_{e}^{\mathrm{T}}-A_{e}\Phi _{1}C_{e}^{\mathrm{T}}\overline{R}%
^{-1}C_{e}\frac{d\Phi _{1}}{d\lambda }A_{e}^{\mathrm{T}} \\
=G_{e}\frac{dQ_{e}}{d\lambda }G_{e}^{\mathrm{T}}+A_{e}\left( \frac{d\Phi _{1}%
}{d\lambda }-\frac{d\Phi _{1}}{d\lambda }C_{e}^{\mathrm{T}}\overline{R}%
^{-1}C_{e}\Phi _{1}\right. \\
\left. -\Phi _{1}C_{e}^{\mathrm{T}}\frac{d\overline{R}^{-1}}{d\lambda }%
C_{e}\Phi _{1}\overline{A}^{\mathrm{T}}-\Phi _{1}C_{e}^{\mathrm{T}}\overline{%
R}^{-1}C_{e}\frac{d\Phi _{1}}{d\lambda }\right) A_{e}^{\mathrm{T}},%
\end{array}%
\end{equation*}%
where, $\overline{R}=R+C_{e}\Phi _{1}C_{e}^{\mathrm{T}}$. Since $\frac{d\Phi
_{1}}{d\lambda }\geq 0$, there exists a unique $\Delta \geq 0$ such that $%
\Delta ^{2}=\frac{d\Phi _{1}}{d\lambda }$ \cite{Laub2005} (pp. 101). Based
on the expression of $\frac{d\Phi _{1}}{d\lambda }$, (\ref{rule1}), and the
fact that%
\begin{equation}
\frac{dX^{-1}}{dx}=-X^{-1}\frac{dX}{dx}X^{-1},  \label{rule2}
\end{equation}%
in which, $X$ is a nonsingular matrix with its components as functions of a
scalar $x$, one has%
\begin{equation*}
\begin{array}{c}
\frac{d\overline{R}^{-1}}{d\lambda }=-\overline{R}^{-1}\frac{d(R+C_{e}\Phi
_{1}C_{e}^{\mathrm{T}})}{d\lambda }\overline{R}^{-1} \\
=-\overline{R}^{-1}C_{e}\Delta ^{2}C_{e}^{\mathrm{T}}\overline{R}^{-1}\leq 0.%
\end{array}%
\end{equation*}%
Therefore, we have%
\begin{equation*}
\begin{array}{l}
A_{e}\left( \frac{d\Phi _{1}}{d\lambda }-\frac{d\Phi _{1}}{d\lambda }C_{e}^{%
\mathrm{T}}\overline{R}^{-1}C_{e}\Phi _{1}\right. \\
\left. -\Phi _{1}C_{e}^{\mathrm{T}}\frac{d\overline{R}^{-1}}{d\lambda }%
C_{e}\Phi _{1}-\Phi _{1}C_{e}^{\mathrm{T}}\overline{R}^{-1}C_{e}\frac{d\Phi
_{1}}{d\lambda }\right) A_{e}^{\mathrm{T}} \\
=A_{e}\left( \Delta ^{2}-\Delta ^{2}C_{e}^{\mathrm{T}}\overline{R}%
^{-1}C_{e}\Phi _{1}-\Phi _{1}\overline{C}^{\mathrm{T}}\overline{R}^{-1}%
\overline{C}\Delta ^{2}\right. \\
\left. +\Phi _{1}C_{e}^{\mathrm{T}}\overline{R}^{-1}C_{e}\Delta ^{2}C_{e}^{%
\mathrm{T}}\overline{R}^{-1}C_{e}\Phi _{1}\right) A_{e}^{\mathrm{T}} \\
=A_{e}(\Delta -\Phi _{1}C_{e}^{\mathrm{T}}\overline{R}^{-1}C_{e}\Delta
)(\Delta -\Phi _{1}C_{e}^{\mathrm{T}}\overline{R}^{-1}C_{e}\Delta )^{\mathrm{%
T}}A_{e}^{\mathrm{T}}\geq 0.%
\end{array}%
\end{equation*}%
Since $G_{e}\frac{dQ_{e}}{d\lambda }G_{e}^{\mathrm{T}}\geq 0,$ from the
expression of $\frac{d\Phi _{2}}{d\lambda }$ given after (\ref{QEDIFF}), we
have $\frac{d\Phi _{2}}{d\lambda }\geq 0.$ The above procedure can be
carried out for $k\geq 3.$ Thus, for $k\geq 0$ and $\lambda \in (0,1)$, it
holds that$\frac{d\Phi _{k}}{d\lambda }\geq 0$.
\end{proof}

When $\Phi _{0}>\Phi _{\infty },$ the assumption that $\Phi _{0}$ is chosen
independently of $\lambda $ can be satisfied by selecting a sufficiently
large $\Phi _{0}$. Therefore, the results in Theorem \ref{th2} can be
applied for this case. When $\Phi _{0}=\Phi _{\infty },$ $\Phi _{k}=\Phi
_{\infty },$ for all $k\geq 0$, e.g., $\Phi _{0}$ is dependent of $\lambda $%
, as $\Phi _{\infty }$ is. We have the following results complementary to
Theorem \ref{th2}.

\begin{corollary}
\label{corco}Assume $Q,R,M$ are positive definite, $(A,C)$ and $%
(A,GQ^{-1/2}) $ are observable and controllable, respectively. Suppose $G$
and $L$ are nonsingular. If $\Phi _{0}=\Phi _{\infty }$, then for $\lambda
\in (0,1)$, it holds that $\frac{d\Phi _{\infty }}{d\lambda }>0.$
\end{corollary}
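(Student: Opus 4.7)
The plan is to differentiate the steady-state Riccati equation with respect to $\lambda$ and reduce the resulting relation to a discrete-time Lyapunov equation whose forcing term is strictly positive definite. When $\Phi_{0}=\Phi_{\infty}$, iterating (\ref{AREaugment}) keeps $\Phi_{k}=\Phi_{\infty}$ for all $k$, so $\Phi_{\infty}$ satisfies the algebraic version of (\ref{AREaugment}) obtained by setting $\Phi_{T}=\Phi_{T-1}=\Phi_{\infty}$. By Lemma \ref{lm0}(ii), $\Phi_{\infty}>0$, so the matrix inversion lemma lets me rewrite the measurement-update term as $\Sigma_{\infty}=(\Phi_{\infty}^{-1}+C_{e}^{\mathrm{T}}R^{-1}C_{e})^{-1}$, yielding the compact form
\begin{equation*}
\Phi_{\infty}=A_{e}\Sigma_{\infty}A_{e}^{\mathrm{T}}+G_{e}Q_{e}G_{e}^{\mathrm{T}}.
\end{equation*}

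Next I would differentiate both sides with respect to $\lambda$. Since $C_{e}^{\mathrm{T}}R^{-1}C_{e}$ is independent of $\lambda$, applying (\ref{rule2}) to $\Sigma_{\infty}^{-1}$ gives
\begin{equation*}
\frac{d\Sigma_{\infty}}{d\lambda}=\Sigma_{\infty}\Phi_{\infty}^{-1}\frac{d\Phi_{\infty}}{d\lambda}\Phi_{\infty}^{-1}\Sigma_{\infty}.
\end{equation*}
A direct check gives the key identity $A_{e}\Sigma_{\infty}\Phi_{\infty}^{-1}=A_{e}-L_{e}C_{e}=:A_{cl}$, i.e.\ the closed-loop matrix of the stationary metamorphic Kalman filter. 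Substituting back produces the discrete-time Lyapunov equation
\begin{equation*}
\frac{d\Phi_{\infty}}{d\lambda}=A_{cl}\frac{d\Phi_{\infty}}{d\lambda}A_{cl}^{\mathrm{T}}+G_{e}\frac{dQ_{e}}{d\lambda}G_{e}^{\mathrm{T}}.
\end{equation*}

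By Proposition \ref{prop11}, under the corollary's hypotheses $(A_{e},C_{e})$ is observable and $(A_{e},G_{e}Q_{e}^{-1/2})$ is controllable, so the classical Kalman filter result ensures that $A_{cl}$ is Schur stable. The Lyapunov equation therefore admits the unique, non-negative solution
\begin{equation*}
\frac{d\Phi_{\infty}}{d\lambda}=\sum_{i=0}^{\infty}A_{cl}^{i}\left(G_{e}\frac{dQ_{e}}{d\lambda}G_{e}^{\mathrm{T}}\right)(A_{cl}^{\mathrm{T}})^{i}.
\end{equation*}
From (\ref{QEDIFF}) in the proof of Theorem \ref{th2}, $\frac{dQ_{e}}{d\lambda}=\lambda^{-2}Q_{e}MQ_{e}>0$. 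Because $G$ and $L$ are nonsingular and hence square, a block-determinant computation shows that $G_{e}$ is nonsingular, so that $G_{e}\frac{dQ_{e}}{d\lambda}G_{e}^{\mathrm{T}}>0$ strictly. The $i=0$ term of the series is thus already positive definite, while every subsequent term is positive semidefinite, which delivers $\frac{d\Phi_{\infty}}{d\lambda}>0$.

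The main obstacle I expect is the algebraic bookkeeping that converts the differentiated Riccati equation into the clean Lyapunov form, specifically verifying the identity $A_{e}\Sigma_{\infty}\Phi_{\infty}^{-1}=A_{cl}$ and invoking the classical Kalman filter result to deduce Schur stability of $A_{cl}$ from the observability/controllability established in Proposition \ref{prop11}. Once those two ingredients are in place, combining (\ref{QEDIFF}) with the nonsingularity of $G_{e}$ closes out the strict positivity.
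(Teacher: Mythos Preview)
Your proposal is correct and follows essentially the same approach as the paper: differentiate the steady-state ARE, reduce to the discrete Lyapunov equation $\frac{d\Phi_{\infty}}{d\lambda}=A_{cl}\frac{d\Phi_{\infty}}{d\lambda}A_{cl}^{\mathrm{T}}+G_{e}\frac{dQ_{e}}{d\lambda}G_{e}^{\mathrm{T}}$ with $A_{cl}=A_{e}-L_{e}C_{e}$, invoke Proposition~\ref{prop11} plus standard Riccati theory for Schur stability of $A_{cl}$, and use nonsingularity of $G_{e}$ together with (\ref{QEDIFF}) for strict positivity of the forcing term. The only cosmetic difference is that the paper differentiates the ARE in its original three-term form and collects the pieces directly into $\overline{A}_{L}\frac{d\Phi_{\infty}}{d\lambda}\overline{A}_{L}^{\mathrm{T}}$ (their $\overline{A}_{L}$ is your $A_{cl}$), whereas you first pass through the matrix-inversion-lemma form $\Phi_{\infty}=A_{e}\Sigma_{\infty}A_{e}^{\mathrm{T}}+G_{e}Q_{e}G_{e}^{\mathrm{T}}$ and then use the identity $A_{e}\Sigma_{\infty}\Phi_{\infty}^{-1}=A_{cl}$; both routes land on the same Lyapunov equation.
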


\begin{proof}
Note that $\Phi _{\infty }$ satisfies%
\begin{equation*}
\Phi _{\infty }=G_{e}Q_{e}G_{e}^{\mathrm{T}}+A_{e}\Phi _{\infty }A_{e}^{%
\mathrm{T}}-A_{e}\Phi _{\infty }C_{e}^{\mathrm{T}}\overline{R}_{\infty
}^{-1}C_{e}\Phi _{\infty }A_{e}^{\mathrm{T}},
\end{equation*}%
where, $\overline{R}_{\infty }=R+C_{e}\Phi _{\infty }C_{e}^{\mathrm{T}}.$
Denote $\overline{L}=A_{e}\Phi _{\infty }C_{e}^{\mathrm{T}}\overline{R}%
_{\infty }^{-1}$ and $\overline{A}_{L}=A_{e}-\overline{L}C_{e}.$
Differentiating the above ARE w.r.t. $\lambda $ on both sides gives us%
\begin{equation*}
\begin{array}{l}
\frac{d\Phi _{\infty }}{d\lambda }=G_{e}\frac{dQ_{e}}{d\lambda }G_{e}^{%
\mathrm{T}}+A_{e}\frac{d\Phi _{\infty }}{d\lambda }A_{e}^{\mathrm{T}}-A_{e}%
\frac{d\Phi _{\infty }}{d\lambda }C_{e}^{\mathrm{T}}\overline{R}_{\infty
}^{-1}C_{e}\Phi _{\infty }A_{e}^{\mathrm{T}} \\
-\text{ }A_{e}\Phi _{\infty }C_{e}^{\mathrm{T}}\frac{d\overline{R}_{\infty
}^{-1}}{d\lambda }C_{e}\Phi _{\infty }A_{e}^{\mathrm{T}}-A_{e}\Phi _{\infty
}C_{e}^{\mathrm{T}}\overline{R}_{\infty }^{-1}C_{e}\frac{d\Phi _{\infty }}{%
d\lambda }A_{e}^{\mathrm{T}} \\
=G_{e}\frac{dQ_{e}}{d\lambda }G_{e}^{\mathrm{T}}+\overline{A}_{L}\frac{d\Phi
_{\infty }}{d\lambda }\overline{A}_{L}^{\mathrm{T}},%
\end{array}%
\end{equation*}%
where we have used (\ref{rule1}) and (\ref{rule2}). When $G$ and $L$ are
nonsingular, one has that $G_{e}$ is full rank, e.g.,%
\begin{equation}
\overline{A}_{L}\frac{d\Phi _{\infty }}{d\lambda }\overline{A}_{L}^{\mathrm{T%
}}-\frac{d\Phi _{\infty }}{d\lambda }=-G_{e}\frac{dQ_{e}}{d\lambda }G_{e}^{%
\mathrm{T}}<0,  \label{contra}
\end{equation}%
given (\ref{QEDIFF}). If $(A,C)$ and $(A,GQ^{-1/2})$ are observable and
controllable, respectively, $(A_{e},C_{e})$ and $(A_{e},G_{e}Q_{e}^{-1/2})$
are observable and controllable, respectively, as proved in Proposition \ref%
{prop11}. Under such conditions, $\overline{A}_{L}$ is Schur stable (see,
e.g., Theorem 4.1 in \cite{Goodwin1984}). Therefore, the discrete-time
Lyapunov matrix equation (\ref{contra}) has a unique positive definite
solution, i.e., $\frac{d\Phi _{\infty }}{d\lambda }>0$.
\end{proof}

We assume $R,M$ to be independent of $\lambda $. It then can be verified that%
\begin{equation}
\frac{d(Q_{e}^{-1})}{d\lambda }=-\frac{1}{\lambda ^{2}}M<0.  \label{QbarDiff}
\end{equation}%
Differentiating (\ref{normalizedcost}) w.r.t. $\lambda $ gives%
\begin{equation*}
\begin{array}{l}
\frac{d\overline{\phi }_{T}}{d\lambda }=-\frac{1}{\lambda ^{2}}%
\sum\limits_{k=0}^{T-1}\overline{\omega }_{k}^{\mathrm{T}}M\overline{\omega }%
_{k} \\
-(\chi _{T-N}^{e}-\widehat{x}_{T-N}^{em})^{\mathrm{T}}\Phi _{T-N}^{-1}\frac{%
d\Phi _{T-N}}{d\lambda }\Phi _{T-N}^{-1}(\chi _{T-N}^{e}-\widehat{x}%
_{T-N}^{em}),%
\end{array}%
\end{equation*}%
where fact (\ref{rule2}) is used, in which, $X$ is a nonsingular matrix. If $%
\Phi _{0}$ is independent of $\lambda $, $\frac{d\Phi _{0}}{d\lambda }=0$,
from Theorem \ref{th2}, one has$\frac{d\Phi _{k}}{d\lambda }\geq 0$. If $%
\Phi _{0}$ is dependent of $\lambda ,$ e.g., $\Phi _{0}$ is chosen to be the
steady-state solution of the ARE (\ref{AREaugment}), it is established in
Corollary \ref{corco} that $\frac{d\Phi _{\infty }}{d\lambda }\geq 0$. Note
that for the above both cases, the \textit{inverse} of matrix $\Phi _{k}$ or
$\Phi _{\infty }$ is used in MMHE as the weighting on the state estimate
obtained using the old measurement outside the moving horizon. In other
words, an increase of $\lambda $ results in a more aggressive forgetting
effect towards the estimate using old data. Therefore, we remark that the
increase of $\lambda $ stands for the designers' willingness to rely on more
recent, rather than relatively old, data. Based on the above results and by
following the steps in \cite{Rawlings2001}, one can establish the stability
of both metamorphic FIE and MHE in the sense of an observer. Due to limited
space, details of the stability analysis are skipped here.

\section{\label{main2}Metamorphic MHE of the Initial State}

The MHE framework in \cite{Alessandri2003}-\cite{Johansen2010} considers the
system%
\begin{equation}
x_{k+1}=Ax_{k}+Bu_{k}+w_{k},\text{ }y_{k}=Cx_{k}+\nu _{k}  \label{plant2}
\end{equation}%
where, $x_{k}\in \mathbf{R}^{n}$, $w_{k}\in \mathcal{W}\subset \mathbf{R}%
^{n} $ and $\nu _{k}\in \mathcal{V}\subset \mathbf{R}^{p}$ with $\mathcal{W}$
and $\mathcal{V}$ standing for two $\mathscr{C}$-sets; $(A,C)$ is
observable. At each time step, the MHE strategy in \cite{Alessandri2003}
solves%
\begin{equation}
\mathcal{P}_{t}:\left\{
\begin{array}{l}
\widehat{x}_{t-N,t}^{o}\text{\textbf{\ }}\mathbf{=\arg \min }\text{ }J_{t}
\\
\text{s.t.~}\widehat{x}_{i+1,t}=A\widehat{x}_{i,t}+Bu_{i},\text{ }i\in %
\mathscr{I}_{t-N}^{t-1} \\
\text{ \ \ \ }\widehat{y}_{i,t}=C\widehat{x}_{i,t},\text{ }i\in \mathscr{I}%
_{t-N}^{t}%
\end{array}%
\right. ,  \label{pt}
\end{equation}%
where,
\begin{equation*}
J_{t}=\mu \left\Vert \widehat{x}_{t-N,t}-\overline{x}_{t-N,t}\right\Vert
^{2}+\sum\limits_{i=t-N}^{t}\left\Vert y_{i}-C\widehat{x}_{i,t}\right\Vert
^{2},
\end{equation*}%
with $\mu \geqslant 0$, $\overline{x}_{t-N}$ is an \textit{a priori }%
estimation of $x_{t-N}$, $N\geq n$. In \cite{Alessandri2003}, it is assumed
that for $t=N$, one already has $\overline{x}_{0}$ as a priori\textit{\ }%
estimate; for $t=N+1,N+2,\cdots ,$ the priori\textit{\ }estimation $%
\overline{x}_{t-N}$ is updated via $\overline{x}_{t-N,t}=A\widehat{x}%
_{t-N-1,t-1}^{o}+Bu_{t-N-1}$, where $\widehat{x}_{t-N-1,t-1}^{o}$ is the
optimal estimate at the previous estimation step. Motivated by the fact that
$\widehat{x}_{i+1,t}$ is obtained by updating the system dynamics from $%
\widehat{x}_{i,t}$ in open-loop in (\ref{pt}), \cite{Johansen2010} has
proposed to embed a Luenberger observer into (\ref{pt}):%
\begin{equation}
\overline{\mathcal{P}}_{t}:\left\{
\begin{array}{l}
\widehat{x}_{t-N,t}^{o}\text{\textbf{\ }}\mathbf{=\arg \min }\text{ }%
\overline{J}_{t} \\
\text{s.t.~}\widehat{x}_{i+1,t}=A\widehat{x}_{i,t}+Bu_{i} \\
+\text{ }L(y_{i}-\widehat{y}_{i,t}),\text{ }i\in \mathscr{I}_{t-N}^{t-1} \\
\widehat{y}_{i,t}=C\widehat{x}_{i,t},\text{ }i\in \mathscr{I}_{t-N}^{t}%
\end{array}%
\right. ,  \label{ptbar}
\end{equation}%
where,%
\begin{equation*}
\overline{J}_{t}=\mu \left\Vert \widehat{x}_{t-N,t}-\overline{x}%
_{t-N,t}\right\Vert ^{2}+\left\Vert W(\mathbf{y}_{t-N}^{t}-\widehat{\mathbf{y%
}}_{t-N}^{t,t})\right\Vert ^{2},
\end{equation*}%
with $\mu \geqslant 0$, $W\in \mathbf{R}^{n\times (N+1)p}$, and%
\begin{equation}
\mathbf{y}_{t-N}^{t}=[y_{t-N},\cdots ,y_{t}],\text{ }\widehat{\mathbf{y}}%
_{t-N}^{t,t}=[\widehat{y}_{t-N,t},\cdots ,\widehat{y}_{t,t}]  \label{yyhat}
\end{equation}%
and $L$ is chosen such that $A_{L}=A-LC$ is Schur stable. Different from
\cite{Alessandri2003}, for $t=N+1,N+2,\cdots ,$ the priori\textit{\ }%
estimation $\overline{x}_{t-N,t}$ in \cite{Johansen2010} is updated via%
\begin{equation}
\left\{
\begin{array}{l}
\overline{x}_{t-N,t}=A\widehat{x}_{t-N-1,t-1}^{o}+Bu_{t-N-1} \\
+\text{ }L(y_{t-N-1}-\widehat{y}_{t-N-1,t-1}^{o}) \\
\widehat{y}_{t-N-1,t-1}^{o}=C\widehat{x}_{t-N-1,t-1}^{o}%
\end{array}%
\right. ,  \label{estimateupdate}
\end{equation}%
where $\widehat{x}_{t-N-1,t-1}^{o}$ is the optimal estimate at the previous
estimation step and $\overline{x}_{0,N}=\overline{x}_{0}$. \cite%
{Johansen2010} has shown that the convergence of the MHE (\ref{ptbar}) only
depends on the Schur stability of $A_{L}$ and is independent of $\mu $,
given the introduction of the pre-estimation.

\subsection{A parameterized cost for MHE with pre-estimation}

The MHE problem we consider takes the form of (\ref{ptbar}), with $\overline{%
J}_{t}$ being replaced by%
\begin{equation*}
\begin{array}{l}
\overrightarrow{J}_{t}=(1-\lambda )\overline{\mu }\left\Vert \widehat{x}%
_{t-N,t}-\overline{x}_{t-N,t}\right\Vert ^{2} \\
+\text{ }\lambda \left( \sum\limits_{i=t-N}^{t}\left\Vert y_{i}-C\widehat{x}%
_{i,t}\right\Vert ^{2}+\mu \left\Vert \widehat{x}_{t-N,t}-\overline{x}%
_{t-N,t}\right\Vert ^{2}\right) ,%
\end{array}%
\end{equation*}%
with $\overline{\mu },$ $\mu \geqslant 0$, $\lambda \in \lbrack 0,1]$. We
consider the simple case with scaler weightings without constraints. The
results can be extended to general cases with matrix weightings and
constraints \cite{Johansen2014}. As in \cite{Johansen2010}, the priori%
\textit{\ }estimation $\overline{x}_{t-N,t}$ is updated via (\ref%
{estimateupdate}). We firstly rewrite $\overrightarrow{J}_{t}$%
\begin{equation}
\overrightarrow{J}_{t}=\overline{\lambda }\left\Vert \widehat{x}_{t-N,t}-%
\overline{x}_{t-N,t}\right\Vert ^{2}+\lambda \left\Vert \mathbf{y}_{t-N}^{t}-%
\widehat{\mathbf{y}}_{t-N}^{t,t}\right\Vert ^{2},  \label{costcost}
\end{equation}%
where,%
\begin{equation*}
\overline{\lambda }=\lambda \mu +(1-\lambda )\overline{\mu }.
\end{equation*}%
When $\lambda =0,$ $\overrightarrow{J}_{t}(\overline{x}_{t-N,t},\mathcal{I}%
_{t})=\overline{\mu }\left\Vert \widehat{x}_{t-N,t}-\overline{x}%
_{t-N,t}\right\Vert ^{2},$ and the optimal estimate $\widehat{x}_{t-N,t}^{o}=%
\overline{x}_{t-N,t}$, i.e., the proposed strategy matches the existing
Luenberger observer. When $\lambda =1$, the term $(1-\lambda )\left\Vert
\widehat{x}_{t-N,t}-\overline{x}_{t-N,t}\right\Vert ^{2}$ disappears, i.e.,
the proposed method reduces to a similar MHE strategy with that of \cite%
{Johansen2010}. In general, $\overline{\mu }$ and $\mu $ are two nonnegative
scalars to be selected by the user. The special cases of when they are
equal, or when one of them is zero with the other being positive, can be
analyzed easily. In the following, we consider the case of $\lambda \in
(0,1) $ with $\overline{\mu }$ and $\mu $ having different positive values.
For this case, the increase of $\lambda $ stands for the designers' desire
to forget relatively old data and make sure the nominal model prediction
track the new measurements closely. This can be more clearly seen if we
divide both sides of (\ref{costcost}) by $\lambda $, i.e.,%
\begin{equation}
\begin{array}{l}
\widetilde{J}_{t}=\lambda ^{-1}\overrightarrow{J}_{t} \\
=\lambda ^{-1}\overline{\lambda }\left\Vert \widehat{x}_{t-N,t}-\overline{x}%
_{t-N,t}\right\Vert ^{2}+\left\Vert \mathbf{y}_{t-N}^{t}-\widehat{\mathbf{y}}%
_{t-N}^{t,t}\right\Vert ^{2}.%
\end{array}
\label{newcost}
\end{equation}%
In $\widetilde{J}_{t}$, the weightings on $\left\Vert \widehat{x}_{t-N,t}-%
\overline{x}_{t-N,t}\right\Vert ^{2}$ and $\left\Vert \mathbf{y}_{t-N}^{t}-%
\widehat{\mathbf{y}}_{t-N}^{t,t}\right\Vert ^{2}$ is $\lambda ^{-1}\overline{%
\lambda }$ and $1$, respectively.

\begin{proposition}
\label{prop1} Assume $\lambda \in (0,1)$, $\overline{\mu }$ and $\mu $ take
different positive values. Then the following results hold:

(i) if $(1-\lambda )\overline{\mu }=\lambda (1-\mu )$, then $\lambda ^{-1}%
\overline{\lambda }=1$; a necessary condition for $\lambda ^{-1}\overline{%
\lambda }=1$ is that $0<\mu <1;$

(ii) if $(1-\lambda )\overline{\mu }>\lambda (1-\mu ),$ then $\lambda ^{-1}%
\overline{\lambda }>1$; moreover, if $\mu \geq 1,$ one always has $\lambda
^{-1}\overline{\lambda }>1;$

(iii) if $(1-\lambda )\overline{\mu }<\lambda (1-\mu ),$ then $\lambda ^{-1}%
\overline{\lambda }<1$; a necessary condition for $\lambda ^{-1}\overline{%
\lambda }<1$ is that $0<\mu <1;$

(iv) one always has $\frac{d(\lambda ^{-1}\overline{\lambda })}{d\lambda }<$
$0$.
\end{proposition}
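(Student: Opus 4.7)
The plan is to treat all four claims as direct algebraic manipulations on the single identity
\begin{equation*}
\lambda^{-1}\overline{\lambda}-1 \;=\; \lambda^{-1}\bigl[(1-\lambda)\overline{\mu}-\lambda(1-\mu)\bigr],
\end{equation*}
which follows immediately from the definition $\overline{\lambda}=\lambda\mu+(1-\lambda)\overline{\mu}$ by subtracting $\lambda$ and dividing by $\lambda>0$. Since $\lambda\in(0,1)$, the sign of $\lambda^{-1}\overline{\lambda}-1$ coincides with the sign of the bracketed quantity on the right. This single observation settles the equivalences in parts (i), (ii), and (iii) at once.

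For the \emph{necessary} conditions I would reason as follows. In (i), equality $\lambda^{-1}\overline{\lambda}=1$ forces $\lambda(1-\mu)=(1-\lambda)\overline{\mu}>0$, and since $\lambda>0$ this requires $1-\mu>0$; combined with the standing hypothesis $\mu>0$ this gives $0<\mu<1$. The same argument shows the necessary condition in (iii), because $(1-\lambda)\overline{\mu}<\lambda(1-\mu)$ together with $(1-\lambda)\overline{\mu}>0$ forces $\lambda(1-\mu)>0$, whence again $0<\mu<1$. For the second half of (ii), if $\mu\geq 1$ then $\lambda(1-\mu)\leq 0<(1-\lambda)\overline{\mu}$, so the bracketed expression is strictly positive and $\lambda^{-1}\overline{\lambda}>1$ irrespective of the specific value of $\overline{\mu}>0$.

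Part (iv) follows from a one-line computation after rewriting
\begin{equation*}
\lambda^{-1}\overline{\lambda} \;=\; \mu+\frac{1-\lambda}{\lambda}\,\overline{\mu} \;=\; \mu+\overline{\mu}\bigl(\lambda^{-1}-1\bigr),
\end{equation*}
so that
\begin{equation*}
\frac{d(\lambda^{-1}\overline{\lambda})}{d\lambda} \;=\; -\frac{\overline{\mu}}{\lambda^{2}} \;<\; 0,
\end{equation*}
where the strict inequality uses only $\overline{\mu}>0$ and $\lambda>0$; note that it does not require $\overline{\mu}\neq\mu$.

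There is no real obstacle here: once the identity in the first paragraph is written down, all four items are essentially sign-reading and a one-line differentiation. The only care needed is to keep track of which implications are equivalences (the three ``if'' clauses in (i)--(iii), which are in fact iff statements by the same calculation) and which are genuinely one-way necessary conditions requiring the extra positivity of $(1-\lambda)\overline{\mu}$.
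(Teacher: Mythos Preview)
Your proof is correct and follows essentially the same approach as the paper: both reduce $\lambda^{-1}\overline{\lambda}=1$ to the equation $(1-\lambda)\overline{\mu}=\lambda(1-\mu)$, extract the necessary conditions from the strict positivity of $(1-\lambda)\overline{\mu}$, and compute the identical derivative $-\overline{\mu}/\lambda^{2}$ for part (iv). Your presentation via the single identity $\lambda^{-1}\overline{\lambda}-1=\lambda^{-1}[(1-\lambda)\overline{\mu}-\lambda(1-\mu)]$ is slightly more streamlined, but the underlying argument is the same.
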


\begin{proof}
(i): Assume that $\lambda \in (0,1)$, $\overline{\mu }$ and $\mu $ take
different positive values. $\lambda ^{-1}\overline{\lambda }=1$ is
equivalent to $\lambda \mu +(1-\lambda )\overline{\mu }=\lambda $, which
reduces to $(1-\lambda )\overline{\mu }=\lambda (1-\mu ).$ Since $(1-\lambda
)\overline{\mu }>0$ for $\lambda \in (0,1)$, $\overline{\mu }>0$, one must
have that $\lambda (1-\mu )>0.$ Therefore, it must hold that $0<\mu <1.$
Part (i) is proved. (ii) and (iii): These results can be proved by following
similar arguments with part (i). (iv): It can be derived%
\begin{equation*}
\frac{d(\lambda ^{-1}\overline{\lambda })}{d\lambda }=-\frac{\overline{\mu }%
}{\lambda ^{2}}<0,
\end{equation*}%
for $\lambda \in (0,1)$, $\overline{\mu }>0$. This completes the proof.
\end{proof}

In the following, we consider the MHE problem (\ref{ptbar}) with cost $%
\overrightarrow{J}_{t}$ being replaced by $\widetilde{J}_{t}$ (\ref{newcost}%
). Doing so will not affect the optimal solution or stability, since the
cost is only changed by a positive scalar. Denote%
\begin{equation*}
\mathbf{C}=diag_{N+1}(C,\cdots ,C)
\end{equation*}%
and $\widehat{x}_{t-N,t}^{o}$ as the optimal solution to the MHE problem (%
\ref{ptbar}) with cost $\widetilde{J}_{t}$ (\ref{newcost}), and the
estimation error as%
\begin{equation*}
e_{t-N}=x_{t-N}-\widehat{x}_{t-N,t}^{o}.
\end{equation*}

\subsection{Estimation error analysis}

Denote%
\begin{equation*}
\begin{array}{l}
\widehat{\mathbf{x}}_{t-N}^{t,t}=[\widehat{x}_{t-N,t},\cdots ,\widehat{x}%
_{t,t}],\text{ }\mathbf{x}_{t-N}^{t}=[x_{t-N},\cdots ,x_{t}], \\
\mathbf{w}_{t-N}^{t-1}=[w_{t-N},\cdots ,w_{t-1}],\text{ }\mathbf{v}%
_{t-N}^{t}=[v_{t-N},\cdots ,v_{t}], \\
\mathbf{u}_{t-N}^{t-1}=[u_{t-N},\cdots ,u_{t-1}].%
\end{array}%
\end{equation*}%
From (\ref{plant2}), one can obtain%
\begin{equation}
\mathbf{y}_{t-N}^{t}=\mathbf{\Lambda }x_{t-N}+\mathbf{\Gamma u}_{t-N}^{t-1}+%
\mathbf{\Phi w}_{t-N}^{t-1}+\mathbf{v}_{t-N}^{t},  \label{plantoutput}
\end{equation}%
where, $\mathbf{y}_{t-N}^{t}$, $\mathbf{x}_{t-N}^{t},\mathbf{u}_{t-N}^{t-1},%
\mathbf{w}_{t-N}^{t-1},\mathbf{v}_{t-N}^{t}$ are defined in (\ref{yyhat})
and before (\ref{plantoutput}), respectively, and $\mathbf{\Lambda ,\Gamma
,\Phi }$ can be found in \cite{Johansen2010}. Rewriting the Luenberger
observer's dynamics in (\ref{ptbar}) gives%
\begin{equation*}
\widehat{x}_{i+1,t}=A_{L}\widehat{x}_{i,t}+Bu_{i}+Ly_{i}.
\end{equation*}%
Then, it can be verified that%
\begin{equation}
\widehat{\mathbf{y}}_{t-N}^{t,t}=\mathbf{C}\widehat{\mathbf{x}}_{t-N}^{t,t}=%
\overline{\mathbf{\Lambda }}\widehat{x}_{t-N,t}+\overline{\mathbf{\Gamma }}%
\mathbf{u}_{t-N}^{t-1}+\overline{\mathbf{\Phi }}\mathbf{y}_{t-N}^{t},
\label{ytthat}
\end{equation}%
where, $\widehat{\mathbf{y}}_{t-N}^{t,t}$, $\mathbf{y}_{t-N}^{t}$, $\widehat{%
\mathbf{x}}_{t-N}^{t,t},\mathbf{u}_{t-N}^{t-1}$ are defined in (\ref{yyhat})
and before (\ref{plantoutput}), respectively, and $\overline{\mathbf{\Lambda
}}\mathbf{,\overline{\mathbf{\Gamma }},}\overline{\mathbf{\Phi }}$ can be
found in \cite{Johansen2010}. Denote $\Psi =I_{(N+1)p}-\mathbf{L}_{N}\ $with%
\begin{equation*}
\mathbf{L}_{N}=\left[
\begin{array}{ccccc}
0 & 0 & \cdots & 0 & 0 \\
CL & 0 & \cdots & 0 & 0 \\
CA_{L}L & CL & \cdots & 0 & 0 \\
\vdots & \vdots & \ddots & \vdots & \vdots \\
CA_{L}^{N-1}L & CA_{L}^{N-2}L & \cdots & CL & 0%
\end{array}%
\right] .
\end{equation*}%
We then have the following results on the error dynamics.

\begin{proposition}
\label{propoopyopy} Assume that $\lambda \in (0,1)$, $\overline{\mu }$ and $%
\mu $ take different positive values. Then the estimation error dynamics
takes the following form%
\begin{equation*}
e_{t-N}=\overline{A}_{L}e_{t-N-1}+S^{-1}S_{1}\mathbf{w}%
_{t-N-1}^{t-1}+S^{-1}S_{2}\mathbf{v}_{t-N-1}^{t},
\end{equation*}%
where,%
\begin{equation*}
\begin{array}{l}
\overline{A}_{L}=\lambda ^{-1}\overline{\lambda }\overline{S}^{-1}A_{L},%
\text{ }\overline{S}=\lambda ^{-1}\overline{\lambda }I+\overline{\mathbf{%
\Lambda }}^{\mathrm{T}}\overline{\mathbf{\Lambda }} \\
S_{1}=\left[
\begin{array}{cc}
\lambda ^{-1}\overline{\lambda }I & -\overline{\mathbf{\Lambda }}^{\mathrm{T}%
}\overline{\mathbf{\Phi }}%
\end{array}%
\right] ,\text{ }S_{2}=\left[
\begin{array}{cc}
-\lambda ^{-1}\overline{\lambda }L & -\overline{\mathbf{\Lambda }}^{\mathrm{T%
}}\Psi%
\end{array}%
\right] \\
\mathbf{w}_{t-N-1}^{t-1}=[w_{t-N-1},\mathbf{w}_{t-N}^{t-1}],\text{ }\mathbf{v%
}_{t-N-1}^{t}=[v_{t-N-1},\mathbf{v}_{t-N}^{t}].%
\end{array}%
\end{equation*}
\end{proposition}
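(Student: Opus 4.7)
The plan is to exploit the unconstrained quadratic structure of the cost $\widetilde{J}_{t}$ in (\ref{newcost}). Since the problem is an unconstrained QP in $\widehat{x}_{t-N,t}$, the optimizer $\widehat{x}_{t-N,t}^{o}$ is characterized by the linear normal equation obtained from $\partial \widetilde{J}_{t}/\partial \widehat{x}_{t-N,t}=0$. Substituting (\ref{ytthat}) so that the fitting residual reads $(I-\overline{\mathbf{\Phi }})\mathbf{y}_{t-N}^{t}-\overline{\mathbf{\Lambda }}\widehat{x}_{t-N,t}-\overline{\mathbf{\Gamma }}\mathbf{u}_{t-N}^{t-1}$ and setting the gradient to zero gives
\begin{equation*}
\overline{S}\widehat{x}_{t-N,t}^{o}=\lambda ^{-1}\overline{\lambda }\,\overline{x}_{t-N,t}+\overline{\mathbf{\Lambda }}^{\mathrm{T}}\bigl[(I-\overline{\mathbf{\Phi }})\mathbf{y}_{t-N}^{t}-\overline{\mathbf{\Gamma }}\mathbf{u}_{t-N}^{t-1}\bigr],
\end{equation*}
with $\overline{S}=\lambda ^{-1}\overline{\lambda }I+\overline{\mathbf{\Lambda }}^{\mathrm{T}}\overline{\mathbf{\Lambda }}>0$. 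Hence $\overline{S}^{-1}$ exists and the optimizer is well defined.

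Next I would form $\overline{S}e_{t-N}=\overline{S}x_{t-N}-\overline{S}\widehat{x}_{t-N,t}^{o}$ and substitute (\ref{plantoutput}) into the right-hand side. The key simplification uses the identities $(I-\overline{\mathbf{\Phi }})\mathbf{\Lambda }=\overline{\mathbf{\Lambda }}$ and $(I-\overline{\mathbf{\Phi }})\mathbf{\Gamma }=\overline{\mathbf{\Gamma }}$, which hold because the Luenberger observer in (\ref{ptbar}) reproduces the plant output exactly when initialized at the true state with zero process and measurement noise; these identities are established in \cite{Johansen2010}. After their application, the $x_{t-N}$ and $\mathbf{u}_{t-N}^{t-1}$ contributions cancel, and only $\lambda ^{-1}\overline{\lambda }(x_{t-N}-\overline{x}_{t-N,t})$ plus noise-driven residuals involving $\Psi =I-\overline{\mathbf{\Phi }}$ acting on $\mathbf{\Phi }\mathbf{w}_{t-N}^{t-1}+\mathbf{v}_{t-N}^{t}$ survive.

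To close the recursion on $e_{t-N-1}$, I would substitute the a priori update (\ref{estimateupdate}), the measurement equation $y_{t-N-1}=Cx_{t-N-1}+v_{t-N-1}$, and the one-step plant dynamics $x_{t-N}=Ax_{t-N-1}+Bu_{t-N-1}+w_{t-N-1}$, producing
\begin{equation*}
x_{t-N}-\overline{x}_{t-N,t}=A_{L}e_{t-N-1}+w_{t-N-1}-Lv_{t-N-1}.
\end{equation*}
Assembling all pieces and left-multiplying by $\overline{S}^{-1}$ then yields the claimed form with $\overline{A}_{L}=\lambda ^{-1}\overline{\lambda }\overline{S}^{-1}A_{L}$, and the rows of $S_{1},S_{2}$ are the concatenations of the $\lambda ^{-1}\overline{\lambda }$ and $-\lambda ^{-1}\overline{\lambda }L$ blocks (arising from the $w_{t-N-1}$ and $-Lv_{t-N-1}$ pieces of $x_{t-N}-\overline{x}_{t-N,t}$) with the blocks arising from the output substitution.

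The main obstacle will be the bookkeeping of the noise blocks at the end: the $\mathbf{w}_{t-N}^{t-1}$ and $\mathbf{v}_{t-N}^{t}$ residuals naturally appear with coefficients $-\overline{\mathbf{\Lambda }}^{\mathrm{T}}(I-\overline{\mathbf{\Phi }})\mathbf{\Phi }$ and $-\overline{\mathbf{\Lambda }}^{\mathrm{T}}(I-\overline{\mathbf{\Phi }})$, and recognising these as the compact $-\overline{\mathbf{\Lambda }}^{\mathrm{T}}\overline{\mathbf{\Phi }}$ and $-\overline{\mathbf{\Lambda }}^{\mathrm{T}}\Psi $ blocks stated in the proposition requires exploiting $\Psi =I-\mathbf{L}_{N}=I-\overline{\mathbf{\Phi }}$ together with the block-triangular structure of $\mathbf{\Phi }$ and $\overline{\mathbf{\Phi }}$ from \cite{Johansen2010}. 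Once this identification is in place, the stated forms of $\overline{A}_{L}$, $S_{1}$, and $S_{2}$ follow by direct pattern matching.
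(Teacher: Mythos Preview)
Your approach is exactly the one the paper intends: the paper's own proof consists of a single sentence deferring to Theorem~1 of \cite{Johansen2010}, and your outline---first-order optimality condition for the unconstrained quadratic $\widetilde{J}_{t}$, substitution of (\ref{plantoutput}) and (\ref{ytthat}) with the identities $(I-\overline{\mathbf{\Phi}})\mathbf{\Lambda}=\overline{\mathbf{\Lambda}}$, $(I-\overline{\mathbf{\Phi}})\mathbf{\Gamma}=\overline{\mathbf{\Gamma}}$, and closure of the recursion via (\ref{estimateupdate})---is precisely that procedure. One small caution on your last paragraph: be careful with the symbol $\overline{\mathbf{\Phi}}$, since in (\ref{ytthat}) it is the $(N{+}1)p\times(N{+}1)p$ map from $\mathbf{y}$ to $\widehat{\mathbf{y}}$, whereas the block appearing in $S_{1}$ must act on $\mathbf{w}_{t-N}^{t-1}\in\mathbf{R}^{Nn}$; the correct identification comes from the relation $\Psi\mathbf{\Phi}=\overline{\mathbf{\Phi}}$ in the sense of \cite{Johansen2010} (where the overbar denotes the $A_{L}$-analogue of the corresponding plant matrix), not from equating $\overline{\mathbf{\Phi}}$ with $\mathbf{L}_{N}$ directly.
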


\begin{proof}
The proof follows a similar procedure with that of Theorem 1 of \cite%
{Johansen2010} and is skipped here.
\end{proof}

\begin{theorem}
\label{th1}Assume $A_{L}$ is Schur stable, $\lambda \in (0,1)$, $\overline{%
\mu }$ and $\mu $ take different positive values. Then we have (i) $%
\overline{A}_{L}$ is Schur stable; (ii) without process and measurement
noises, the estimation error exponentially converges to zero; (iii) without
process and measurement noises, when either of the following two conditions
is satisfied, $0<\mu <\overline{\mu },$ $\overline{\mu }<\mu \leq \overline{%
\mu }+\frac{1}{\lambda },$ the decaying rate of the estimation error is
monotonically increasing w.r.t. $\lambda $.
\end{theorem}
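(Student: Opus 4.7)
My plan is to address the three parts in order, leveraging the error-dynamics representation of Proposition~\ref{propoopyopy}: $e_{t-N}=\overline{A}_L e_{t-N-1}+S^{-1}S_1\mathbf{w}_{t-N-1}^{t-1}+S^{-1}S_2\mathbf{v}_{t-N-1}^{t}$, together with the key rewriting $\overline{A}_L=\alpha\,\overline{S}^{-1}A_L=(I+\alpha^{-1}\mathcal{O}_N)^{-1}A_L$, where $\alpha:=\lambda^{-1}\overline{\lambda}>0$ and $\mathcal{O}_N:=\overline{\mathbf{\Lambda}}^{\mathrm{T}}\overline{\mathbf{\Lambda}}\succeq 0$ is (up to scaling) the observability Gramian of $(A_L,C)$ on the window of length $N+1$.

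For part (i), I would mimic the argument of Theorem~2 of \cite{Johansen2010}. The idea is that pre-multiplying $A_L$ by the contractive positive matrix $(I+\alpha^{-1}\mathcal{O}_N)^{-1}\preceq I$ cannot destabilise a Schur-stable matrix, provided the observability relation $\mathcal{O}_N-A_L^{\mathrm{T}}\mathcal{O}_N A_L=C^{\mathrm{T}}C-(A_L^{N+1})^{\mathrm{T}}C^{\mathrm{T}}C\,A_L^{N+1}$ is used to replace the lost negative-definiteness margin. Concretely, I would take a Lyapunov candidate $V(e)=e^{\mathrm{T}}(I+\alpha^{-1}\mathcal{O}_N)e$, compute $V(\overline{A}_L e)-V(e)=e^{\mathrm{T}}[A_L^{\mathrm{T}}(I+\alpha^{-1}\mathcal{O}_N)^{-1}A_L-(I+\alpha^{-1}\mathcal{O}_N)]e$, and argue via the observability-Gramian identity plus Schur stability of $A_L$ that this matrix is negative definite for $\alpha>0$, so $\rho(\overline{A}_L)<1$.

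For part (ii), with zero noise the recursion collapses to $e_{t-N}=\overline{A}_L e_{t-N-1}$, and Schur stability from part (i) yields $\|e_{t-N}\|\le c\,\gamma^{\,t-N}\|e_0\|$ for any $\gamma\in(\rho(\overline{A}_L),1)$, i.e. exponential convergence.

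For part (iii), I read "decaying rate monotonically increasing in $\lambda$" as the induced norm (or spectral radius) of $\overline{A}_L$ being strictly decreasing in $\lambda$. The engine is Proposition~\ref{prop1}(iv), which gives $d\alpha/d\lambda=-\overline{\mu}/\lambda^{2}<0$. Decreasing $\alpha$ makes $\alpha^{-1}\mathcal{O}_N$ strictly larger in the PSD order, hence $(I+\alpha^{-1}\mathcal{O}_N)^{-1}$ strictly smaller in PSD order, so the contractive factor applied to $A_L$ gets stronger. I would differentiate $\overline{A}_L=\alpha(\alpha I+\mathcal{O}_N)^{-1}A_L$ using (\ref{rule1})–(\ref{rule2}) to obtain $d\overline{A}_L/d\lambda=(d\alpha/d\lambda)(\alpha I+\mathcal{O}_N)^{-2}\mathcal{O}_N A_L$, then feed this into a first-order variation of the weighted induced norm (e.g.\ $\|\cdot\|_{\overline{S}}$ or a spectral-radius perturbation formula) and show the derivative is strictly negative. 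The two sufficient conditions $0<\mu<\overline{\mu}$ and $\overline{\mu}<\mu\le\overline{\mu}+1/\lambda$ would enter by controlling the sign/size of $\alpha$ (and of $\lambda^{-1}\overline{\lambda}$, via the three regimes of Proposition~\ref{prop1}(i)–(iii)) so that the perturbation terms keep the correct sign.

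The main obstacle will be part (iii): moving from the clean PSD-monotonicity of $(I+\alpha^{-1}\mathcal{O}_N)^{-1}$ in $\lambda$ to strict monotonicity of $\rho(\overline{A}_L)$ is non-trivial because $(I+\alpha^{-1}\mathcal{O}_N)^{-1}$ and $A_L$ need not commute; understanding precisely why the two stated conditions on $(\mu,\overline{\mu})$ suffice, and why the regime $\mu>\overline{\mu}+1/\lambda$ must be excluded, is where the bookkeeping gets delicate and is the step I expect to take most of the effort.
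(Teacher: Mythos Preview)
Your plan for parts (i) and (ii) is reasonable, though for (i) you choose a different Lyapunov candidate than the paper. The paper does not use $V(e)=e^{\mathrm{T}}(I+\widetilde{\lambda}\,\overline{\mathbf{\Lambda}}^{\mathrm{T}}\overline{\mathbf{\Lambda}})e$; instead it starts from a Lyapunov equation $A_L^{\mathrm{T}}P_LA_L-P_L+Q_L=0$ for the Schur-stable $A_L$ and then builds $\overline{P}_L=(I+\widetilde{\lambda}\,\overline{\mathbf{\Lambda}}^{\mathrm{T}}\overline{\mathbf{\Lambda}})P_L(I+\widetilde{\lambda}\,\overline{\mathbf{\Lambda}}^{\mathrm{T}}\overline{\mathbf{\Lambda}})$, showing that $\overline{A}_L^{\mathrm{T}}\overline{P}_L\overline{A}_L-\overline{P}_L+\overline{Q}_L=0$ with $\overline{Q}_L>0$. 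This specific construction is not incidental: the matrix $\overline{Q}_L$ is precisely the object differentiated in part (iii), so your alternative Lyapunov function, even if it establishes (i), leaves you without the machinery the paper relies on later.

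The real gap is in part (iii). Your reading of ``decaying rate monotonically increasing in $\lambda$'' as monotonicity of $\rho(\overline{A}_L)$ or an induced norm of $\overline{A}_L$ is \emph{not} what the paper proves, and in fact the paper explicitly remarks (immediately after the proof) that whether the eigenvalue magnitudes of $\overline{A}_L$ decrease with $\lambda$ ``can not be verified'', because $(I+\widetilde{\lambda}\,\overline{\mathbf{\Lambda}}^{\mathrm{T}}\overline{\mathbf{\Lambda}})^{-1}$ and $A_L$ need not commute. What the paper actually shows is that the Lyapunov decrement $\Delta V_{e_t}=-e_t^{\mathrm{T}}\overline{Q}_Le_t$ becomes more negative as $\lambda$ grows, i.e.\ $d\overline{Q}_L/d\lambda>0$. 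Writing $\overline{Q}_L=\widetilde{\lambda}(P_L\overline{\mathbf{\Lambda}}^{\mathrm{T}}\overline{\mathbf{\Lambda}}+\overline{\mathbf{\Lambda}}^{\mathrm{T}}\overline{\mathbf{\Lambda}}P_L)+\widetilde{\lambda}^{2}\overline{\mathbf{\Lambda}}^{\mathrm{T}}\overline{\mathbf{\Lambda}}P_L\overline{\mathbf{\Lambda}}^{\mathrm{T}}\overline{\mathbf{\Lambda}}+Q_L$ and differentiating, one gets a term in $d\widetilde{\lambda}/d\lambda=\overline{\mu}/\overline{\lambda}^{2}>0$ and a term in $d(\widetilde{\lambda}^{2})/d\lambda=2\lambda(1-\lambda(\mu-\overline{\mu}))/\overline{\lambda}^{3}$. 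The two stated conditions on $(\mu,\overline{\mu})$ are exactly what force $1-\lambda(\mu-\overline{\mu})\ge 0$, so that the second term is nonnegative; they have nothing to do with the three regimes of Proposition~\ref{prop1}(i)--(iii) or with controlling the sign of $\alpha$ itself. Your proposed route through a spectral-radius perturbation formula will not close, and your diagnosis of where the conditions enter is off target.
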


\begin{proof}
(i): If $A_{L}$ is Schur stable, for $Q_{L}>0,$ there exists a unique
solution $P_{L}>0$ satisfying the following Lyapunov equation%
\begin{equation}
A_{L}^{\mathrm{T}}P_{L}A_{L}-P_{L}+Q_{L}=0.  \label{lyapunov}
\end{equation}%
From Proposition \ref{propoopyopy}, one has%
\begin{equation*}
\overline{A}_{L}=\lambda ^{-1}\overline{\lambda }\overline{S}^{-1}A_{L}=(I+%
\widetilde{\lambda }\overline{\mathbf{\Lambda }}^{\mathrm{T}}\overline{%
\mathbf{\Lambda }})^{-1}A_{L},
\end{equation*}%
with%
\begin{equation}
\widetilde{\lambda }=\frac{\lambda }{\overline{\lambda }}.
\label{lamdatilde}
\end{equation}%
With the above, (\ref{lyapunov}) can be rewritten as%
\begin{equation*}
\overline{A}_{L}^{\mathrm{T}}(I+\widetilde{\lambda }\overline{\mathbf{%
\Lambda }}^{\mathrm{T}}\overline{\mathbf{\Lambda }})P_{L}(I+\widetilde{%
\lambda }\overline{\mathbf{\Lambda }}^{\mathrm{T}}\overline{\mathbf{\Lambda }%
})\overline{A}_{L}-\overline{P}_{L}+Q_{L}=0,
\end{equation*}%
i.e., $\overline{A}_{L}^{\mathrm{T}}\overline{P}_{L}\overline{A}_{L}-%
\overline{P}_{L}+\overline{Q}_{L}=0,$ with%
\begin{equation*}
\begin{array}{l}
\overline{P}_{L}=(I+\widetilde{\lambda }\overline{\mathbf{\Lambda }}^{%
\mathrm{T}}\overline{\mathbf{\Lambda }})P_{L}(I+\widetilde{\lambda }%
\overline{\mathbf{\Lambda }}^{\mathrm{T}}\overline{\mathbf{\Lambda }})>0, \\
\overline{Q}_{L}=(I+\widetilde{\lambda }\overline{\mathbf{\Lambda }}^{%
\mathrm{T}}\overline{\mathbf{\Lambda }})P_{L}(I+\widetilde{\lambda }%
\overline{\mathbf{\Lambda }}^{\mathrm{T}}\overline{\mathbf{\Lambda }}%
)-P_{L}+Q_{L}.%
\end{array}%
\end{equation*}%
Thus, part (i) can be established if $\overline{Q}_{L}>0.$ Given $Q_{L}>0$,
the positive definiteness of $\overline{Q}_{L}$ depends on $(I+\widetilde{%
\lambda }\overline{\mathbf{\Lambda }}^{\mathrm{T}}\overline{\mathbf{\Lambda }%
})P_{L}(I+\widetilde{\lambda }\overline{\mathbf{\Lambda }}^{\mathrm{T}}%
\overline{\mathbf{\Lambda }})-P_{L}.$ Since $(A,C)$ is observable and $N\geq
n$, one has $\overline{\mathbf{\Lambda }}^{\mathrm{T}}\overline{\mathbf{%
\Lambda }}>0.$ Therefore, we have%
\begin{equation*}
\overline{Q}_{L}=\widetilde{\lambda }(P_{L}\overline{\mathbf{\Lambda }}^{%
\mathrm{T}}\overline{\mathbf{\Lambda }}+\overline{\mathbf{\Lambda }}^{%
\mathrm{T}}\overline{\mathbf{\Lambda }}P_{L})+\widetilde{\lambda }^{2}%
\overline{\mathbf{\Lambda }}^{\mathrm{T}}\overline{\mathbf{\Lambda }}P_{L}%
\overline{\mathbf{\Lambda }}^{\mathrm{T}}\overline{\mathbf{\Lambda }}+Q_{L}.
\end{equation*}%
Note that $\widetilde{\lambda }^{2}\overline{\mathbf{\Lambda }}^{\mathrm{T}}%
\overline{\mathbf{\Lambda }}P_{L}\overline{\mathbf{\Lambda }}^{\mathrm{T}}%
\overline{\mathbf{\Lambda }}>0.$ We will just have to prove that $P_{L}%
\overline{\mathbf{\Lambda }}^{\mathrm{T}}\overline{\mathbf{\Lambda }}+%
\overline{\mathbf{\Lambda }}^{\mathrm{T}}\overline{\mathbf{\Lambda }}%
P_{L}>0. $ To do so, we show all the eigenvalues of $P_{L}\overline{\mathbf{%
\Lambda }}^{\mathrm{T}}\overline{\mathbf{\Lambda }}$ are positive. Assume $%
\gamma $ is an eigenvalue of $P_{L}\overline{\mathbf{\Lambda }}^{\mathrm{T}}%
\overline{\mathbf{\Lambda }}$ with the associated eigenvector $\tau \neq 0,$
i.e., $P_{L}\overline{\mathbf{\Lambda }}^{\mathrm{T}}\overline{\mathbf{%
\Lambda }}\tau =\gamma \tau $. Multiplying both sides of the above equation
from the left by $\overline{\mathbf{\Lambda }}^{\mathrm{T}}\overline{\mathbf{%
\Lambda }}$ gives us $\overline{\mathbf{\Lambda }}^{\mathrm{T}}\overline{%
\mathbf{\Lambda }}P_{L}\overline{\mathbf{\Lambda }}^{\mathrm{T}}\overline{%
\mathbf{\Lambda }}\tau =\gamma \overline{\mathbf{\Lambda }}^{\mathrm{T}}%
\overline{\mathbf{\Lambda }}\tau $. This further implies that $\tau ^{%
\mathrm{T}}\overline{\mathbf{\Lambda }}^{\mathrm{T}}\overline{\mathbf{%
\Lambda }}P_{L}\overline{\mathbf{\Lambda }}^{\mathrm{T}}\overline{\mathbf{%
\Lambda }}\tau =\gamma \tau ^{\mathrm{T}}\overline{\mathbf{\Lambda }}^{%
\mathrm{T}}\overline{\mathbf{\Lambda }}\tau $, i.e.,%
\begin{equation*}
\gamma =\frac{\tau ^{\mathrm{T}}\overline{\mathbf{\Lambda }}^{\mathrm{T}}%
\overline{\mathbf{\Lambda }}P_{L}\overline{\mathbf{\Lambda }}^{\mathrm{T}}%
\overline{\mathbf{\Lambda }}\tau }{\tau ^{\mathrm{T}}\overline{\mathbf{%
\Lambda }}^{\mathrm{T}}\overline{\mathbf{\Lambda }}\tau }>0,
\end{equation*}%
given $\overline{\mathbf{\Lambda }}^{\mathrm{T}}\overline{\mathbf{\Lambda }}%
P_{L}\overline{\mathbf{\Lambda }}^{\mathrm{T}}\overline{\mathbf{\Lambda }}>0$
and $\overline{\mathbf{\Lambda }}^{\mathrm{T}}\overline{\mathbf{\Lambda }}>0$%
. Thus, all the eigenvalues of $P_{L}\overline{\mathbf{\Lambda }}^{\mathrm{T}%
}\overline{\mathbf{\Lambda }}$ are positive, although $P_{L}\overline{%
\mathbf{\Lambda }}^{\mathrm{T}}\overline{\mathbf{\Lambda }}$ is not
symmetric in general. Similarly, all the eigenvalues of $\overline{\mathbf{%
\Lambda }}^{\mathrm{T}}\overline{\mathbf{\Lambda }}P_{L}$ are positive,
i.e., $\widetilde{\lambda }(P_{L}\overline{\mathbf{\Lambda }}^{\mathrm{T}}%
\overline{\mathbf{\Lambda }}+\overline{\mathbf{\Lambda }}^{\mathrm{T}}%
\overline{\mathbf{\Lambda }}P_{L})>0$. Thus, one has $\overline{Q}_{L}>0$
and $\overline{A}_{L}$ is Schur stable. (ii) Without process and measurement
noises, the estimation error dynamics reduces to%
\begin{equation}
e_{t+1}=\overline{A}_{L}e_{t}  \label{estimadyc}
\end{equation}%
for $t=0,1,2,3,\cdots $. Based on part (i), the estimation error
exponentially converges to zero. (iii) Denote%
\begin{equation*}
V_{e_{t}}=e_{t}^{\mathrm{T}}\overline{P}_{L}e_{t}
\end{equation*}%
as the Lyapunov function for (\ref{estimadyc}). It holds that%
\begin{equation*}
\Delta V_{e_{t}}=V_{e_{t+1}}-V_{e_{t}}=-e_{t}^{\mathrm{T}}\overline{Q}%
_{L}e_{t}.
\end{equation*}%
Therefore, part (iii) will be proved if $\frac{d\overline{Q}_{L}}{d\lambda }%
>0$. For $\overline{Q}_{L},$ we have%
\begin{equation*}
\frac{d\overline{Q}_{L}}{d\lambda }=(P_{L}\overline{\mathbf{\Lambda }}^{%
\mathrm{T}}\overline{\mathbf{\Lambda }}+\overline{\mathbf{\Lambda }}^{%
\mathrm{T}}\overline{\mathbf{\Lambda }}P_{L})\frac{d\widetilde{\lambda }}{%
d\lambda }+\overline{\mathbf{\Lambda }}^{\mathrm{T}}\overline{\mathbf{%
\Lambda }}P_{L}\overline{\mathbf{\Lambda }}^{\mathrm{T}}\overline{\mathbf{%
\Lambda }}\frac{d\widetilde{\lambda }^{2}}{d\lambda }.
\end{equation*}%
It can be verified that%
\begin{equation}
\frac{d\widetilde{\lambda }}{d\lambda }=\frac{\overline{\mu }}{\overline{%
\lambda }^{2}},\text{ }\frac{d(\widetilde{\lambda }^{2})}{d\lambda }=\frac{%
2\lambda (1-\lambda (\mu -\overline{\mu }))}{\overline{\lambda }^{3}}.
\label{derivimportant}
\end{equation}%
Obviously, one has $\frac{d\widetilde{\lambda }}{d\lambda }>0$. Therefore,
if $1-\lambda (\mu -\overline{\mu })\geq 0,$ one always has $\frac{d%
\overline{Q}_{L}}{d\lambda }>0.$ Given $\lambda \in (0,1)$ and $\mu \neq
\overline{\mu },$ there are two conditions under which the above inequality
holds, namely, $0<\mu <\overline{\mu },$ or $\overline{\mu }<\mu \leq
\overline{\mu }+\frac{1}{\lambda }$.
\end{proof}

Given $\overline{A}_{L}=\lambda ^{-1}\overline{\lambda }\overline{S}%
^{-1}A_{L}$ and $\overline{S}>0,$ even it has been proved in Proposition \ref%
{prop1} that $\frac{d(\lambda ^{-1}\overline{\lambda })}{d\lambda }<$ $0$,
it is not necessarily true that the magnitude of the eigenvalues of $%
\overline{A}_{L}$ will decrease with one increases $\lambda $. This can be
more clearly seen as follows. Note that we can rewrite%
\begin{equation*}
\overline{A}_{L}=(1+(\lambda ^{-1}\overline{\lambda })^{-1}\overline{\mathbf{%
\Lambda }}^{\mathrm{T}}\overline{\mathbf{\Lambda }})^{-1}A_{L}=\widetilde{%
\mathbf{\Lambda }}^{-1}A_{L}.
\end{equation*}%
Therefore, it holds that%
\begin{equation*}
\begin{array}{l}
\frac{d\overline{A}_{L}}{d\lambda }=-\widetilde{\mathbf{\Lambda }}^{-1}\frac{%
d\widetilde{\mathbf{\Lambda }}}{d\lambda }\widetilde{\mathbf{\Lambda }}%
^{-1}A_{L} \\
=-\widetilde{\mathbf{\Lambda }}^{-1}\frac{d(\lambda ^{-1}\overline{\lambda }%
)^{-1}}{d\lambda }\overline{\mathbf{\Lambda }}^{\mathrm{T}}\overline{\mathbf{%
\Lambda }}\widetilde{\mathbf{\Lambda }}^{-1}A_{L} \\
=\overrightarrow{\mathbf{\Lambda }}A_{L},%
\end{array}%
\end{equation*}%
where, $\overrightarrow{\mathbf{\Lambda }}<0$, since $\frac{d(\lambda ^{-1}%
\overline{\lambda })}{d\lambda }<$ $0$. However, whether the magnitudes of
the eigenvalues of $\overline{A}_{L}$ increase or decrease can not be
verified. What we can prove is that%
\begin{equation*}
\frac{d\Delta V_{e_{t}}}{d\lambda }=-e_{t}^{\mathrm{T}}\frac{d\overline{Q}%
_{L}}{d\lambda }e_{t}<0,
\end{equation*}%
since one always has $\frac{d\overline{Q}_{L}}{d\lambda }>0\ $, e.g., the
decaying rate of the error dynamics without disturbances can be increased
monotonically by increasing $\lambda $ with proper selections of $\mu $ and $%
\overline{\mu }$.

In the following, we consider the case with bounded disturbances. Denote%
\begin{equation*}
z_{w}=\max\limits_{w_{t}\in \mathcal{W}}\left\Vert w_{t}\right\Vert ,\text{ }%
z_{v}=\max\limits_{v_{t}\in \mathcal{V}}\left\Vert v_{t}\right\Vert .
\end{equation*}%
Given $(A,C)$ is observable in $N$ steps, one has $\overline{\mathbf{\Lambda
}}^{\mathrm{T}}\overline{\mathbf{\Lambda }}>0$ with $\overline{\mathbf{%
\Lambda }}$ being defined in (\ref{ytthat}). Although the eigenvalues of $%
\overline{\mathbf{\Lambda }}^{\mathrm{T}}\overline{\mathbf{\Lambda }}$ is
dependent on the choice of $L$, without loss of generality, it is reasonable
for us to assume that%
\begin{equation}
\overline{\mathbf{\Lambda }}^{\mathrm{T}}\overline{\mathbf{\Lambda }}\geq
\eta I  \label{assum}
\end{equation}%
where $\eta $ is a certain positive number.

\begin{theorem}
\label{th1copy}Assume $A_{L}$ is Schur stable, $\lambda \in (0,1)$, $%
\overline{\mu }$ and $\mu $ take different positive values. We have:

(i) for $t=N,N+1$,$\cdots ,$ the estimation error is bounded by $\left\Vert
e_{t-N}\right\Vert \leq \zeta _{t-N},$ where, $\zeta _{t-N}$ is a sequence
generated by%
\begin{equation}
\zeta _{t}=a\zeta _{t-1}+b\text{, }\zeta _{0}=b_{0},  \label{sequence}
\end{equation}%
in which,%
\begin{equation*}
\begin{array}{l}
a=a_{l}\left\Vert \lambda ^{-1}\overline{\lambda }\overline{S}%
^{-1}\right\Vert ,\text{ }b=\left\Vert \lambda ^{-1}\overline{\lambda }%
\overline{S}^{-1}\right\Vert \overline{z}+\left\Vert \overline{S}%
^{-1}\right\Vert \overline{\theta }, \\
b_{0}=\left\Vert I-\overline{S}^{-1}\overline{\mathbf{\Lambda }}^{\mathrm{T}}%
\overline{\mathbf{\Lambda }}\right\Vert \left\Vert x_{0}\right\Vert
+\left\Vert \lambda ^{-1}\overline{\lambda }\overline{S}^{-1}\right\Vert
\left\Vert \overline{x}_{0}\right\Vert +\left\Vert \overline{S}%
^{-1}\right\Vert \overline{\theta },%
\end{array}%
\end{equation*}%
with%
\begin{equation*}
\begin{array}{l}
\overline{z}=z_{w}+lz_{v},\text{ }\overline{\theta }=\theta _{1}\sqrt{N}%
z_{w}+\theta _{2}\sqrt{N+1}z_{v}, \\
a_{l}=\left\Vert A_{L}\right\Vert ,\text{ }\theta _{1}=\left\Vert \overline{%
\mathbf{\Lambda }}^{\mathrm{T}}\overline{\mathbf{\Phi }}\right\Vert ,\text{ }%
l=\left\Vert L\right\Vert ,\text{ }\theta _{2}=\left\Vert \overline{\mathbf{%
\Lambda }}^{\mathrm{T}}\Psi \right\Vert .%
\end{array}%
\end{equation*}

(ii) if%
\begin{equation}
\frac{a_{l}\sqrt{n}}{1+\widetilde{\lambda }\eta }<1,  \label{conditiona}
\end{equation}%
with $n$ as the dimension of the state, and $\widetilde{\lambda },\eta
,a_{l} $ being defined in (\ref{lamdatilde}), (\ref{assum}), (\ref{sequence}%
), respectively, then $a<1\ $and the sequence $\left\{ \zeta _{t}\right\} $
converges exponentially to%
\begin{equation*}
\zeta _{\infty }=\frac{b}{1-a}.
\end{equation*}
\end{theorem}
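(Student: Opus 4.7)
The plan is to derive both claims directly from the closed-form error recursion established in Proposition \ref{propoopyopy} together with the assumption \eqref{assum}.

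For part (i), I would first apply the triangle inequality and sub-multiplicativity of the norm to the recursion
\begin{equation*}
e_{t-N}=\overline{A}_{L}e_{t-N-1}+\overline{S}^{-1}S_{1}\mathbf{w}_{t-N-1}^{t-1}+\overline{S}^{-1}S_{2}\mathbf{v}_{t-N-1}^{t}.
\end{equation*}
Using $\overline{A}_{L}=\lambda ^{-1}\overline{\lambda }\overline{S}^{-1}A_{L}$, the leading coefficient is bounded by $\|\overline{A}_{L}\|\leq a_{l}\|\lambda ^{-1}\overline{\lambda }\overline{S}^{-1}\|=a$. For the disturbance terms I would split the block structure of $S_{1}$ and $S_{2}$: the first block couples a single scalar-weighted sample ($w_{t-N-1}$ or $v_{t-N-1}$) through $\lambda ^{-1}\overline{\lambda }\overline{S}^{-1}$, while the second block couples the stacked vectors $\mathbf{w}_{t-N}^{t-1}$ and $\mathbf{v}_{t-N}^{t}$ through $\overline{S}^{-1}\overline{\mathbf{\Lambda}}^{\mathrm{T}}\overline{\mathbf{\Phi }}$ and $\overline{S}^{-1}\overline{\mathbf{\Lambda}}^{\mathrm{T}}\Psi $ respectively. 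Since $\|\mathbf{w}_{t-N}^{t-1}\|\leq \sqrt{N}\,z_{w}$ and $\|\mathbf{v}_{t-N}^{t}\|\leq \sqrt{N+1}\,z_{v}$, grouping the first-block pieces produces $\|\lambda ^{-1}\overline{\lambda }\overline{S}^{-1}\|\,\overline{z}$ and grouping the second-block pieces produces $\|\overline{S}^{-1}\|\,\overline{\theta }$; together they recover the constant $b$ in \eqref{sequence}. Therefore $\|e_{t-N}\|\leq a\|e_{t-N-1}\|+b$ for $t\geq N+1$, so a straightforward induction against the scalar recursion $\zeta _{t}=a\zeta _{t-1}+b$ delivers the claimed bound.

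For the base step I still need $\|e_{0}\|\leq b_{0}$. I would solve the unconstrained quadratic \eqref{newcost} at $t=N$ explicitly by setting its gradient w.r.t.\ $\widehat{x}_{0,N}$ to zero, which yields $\overline{S}\,\widehat{x}_{0,N}^{o}=\lambda ^{-1}\overline{\lambda }\,\overline{x}_{0,N}+\overline{\mathbf{\Lambda}}^{\mathrm{T}}(\mathbf{y}_{0}^{N}-\overline{\mathbf{\Gamma}}\mathbf{u}_{0}^{N-1}-\overline{\mathbf{\Phi}}\mathbf{y}_{0}^{N})$. Substituting the output expansion \eqref{plantoutput} and using the identity $\overline{\mathbf{\Lambda}}^{\mathrm{T}}(I-\overline{\mathbf{\Phi }})\mathbf{\Lambda}=\overline{\mathbf{\Lambda}}^{\mathrm{T}}\overline{\mathbf{\Lambda}}$ (obtained by noticing that without noise the Luenberger simulator reproduces the nominal response when initialised from the true state) gives $e_{0}=(I-\overline{S}^{-1}\overline{\mathbf{\Lambda}}^{\mathrm{T}}\overline{\mathbf{\Lambda}})x_{0}-\lambda ^{-1}\overline{\lambda }\overline{S}^{-1}\overline{x}_{0}-\overline{S}^{-1}\overline{\mathbf{\Lambda}}^{\mathrm{T}}\overline{\mathbf{\Phi}}\mathbf{w}_{0}^{N-1}-\overline{S}^{-1}\overline{\mathbf{\Lambda}}^{\mathrm{T}}\Psi \mathbf{v}_{0}^{N}$; the triangle inequality then reproduces $b_{0}$. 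This base-case calculation is, as I expect, the most tedious piece of the argument; the rest is just norm manipulation.

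For part (ii) I would work with the compact rewriting $\lambda ^{-1}\overline{\lambda }\overline{S}^{-1}=(I+\widetilde{\lambda }\overline{\mathbf{\Lambda}}^{\mathrm{T}}\overline{\mathbf{\Lambda}})^{-1}$ already used in the proof of Theorem~\ref{th1}. Under \eqref{assum}, $I+\widetilde{\lambda }\overline{\mathbf{\Lambda}}^{\mathrm{T}}\overline{\mathbf{\Lambda}}\succeq (1+\widetilde{\lambda }\eta )I$, so its inverse has spectral norm at most $(1+\widetilde{\lambda }\eta )^{-1}$. Invoking the norm-equivalence $\|M\|\leq \sqrt{n}\|M\|_{2}$ for $n\times n$ matrices then gives $\|\lambda ^{-1}\overline{\lambda }\overline{S}^{-1}\|\leq \sqrt{n}/(1+\widetilde{\lambda }\eta )$, and multiplying by $a_{l}$ shows that condition \eqref{conditiona} forces $a<1$. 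Once $a<1$ is secured, applying the explicit solution $\zeta _{t}=a^{t}(b_{0}-b/(1-a))+b/(1-a)$ to the affine recursion established in part (i) proves that $\zeta _{t}\to b/(1-a)=\zeta _{\infty }$ at the exponential rate $a^{t}$, completing the proof.
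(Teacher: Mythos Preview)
Your proposal is correct and follows essentially the same approach as the paper. For part (i) the paper simply cites the error dynamics of Proposition~\ref{propoopyopy} together with the triangle and sub-multiplicative inequalities and refers to \cite{Johansen2010} for the analogous steps, while you have spelled out the block splitting of $S_1,S_2$ and the base-case computation of $e_0$ explicitly; for part (ii) the paper reaches the same bound $a\leq a_l\sqrt{n}/(1+\widetilde{\lambda}\eta)$ via the trace inequality $tr(M_1M_2)\leq \lambda_{\max}(M_1)tr(M_2)$ of \cite{Athans} applied to the Frobenius norm $\|M\|=\sqrt{tr(M^2)}$, which is exactly your Frobenius-to-spectral norm equivalence $\|M\|\leq\sqrt{n}\|M\|_2$ stated in different language.
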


\begin{proof}
(i) This part can be proved by using the error dynamics given in Proposition %
\ref{propoopyopy}, following similar steps as those in Theorem 3 of \cite%
{Johansen2010}, and recognizing that $\left\Vert A+B\right\Vert \leq
\left\Vert A\right\Vert +\left\Vert B\right\Vert $, $\left\Vert
AB\right\Vert \leq \left\Vert A\right\Vert \left\Vert B\right\Vert ,$ where,
$A$ and $B$ are vectors or matrices of proper dimensions. (ii) The
convergence of sequence $\left\{ \zeta _{t}\right\} $ (\ref{sequence})
depends on $a<1.$ When $a<1$, $\zeta _{\infty }=\frac{b}{1-a},$ as $%
t\rightarrow \infty .$ Note that%
\begin{equation*}
\begin{array}{l}
a=a_{l}\left\Vert \lambda ^{-1}\overline{\lambda }S^{-1}\right\Vert =a_{l}%
\sqrt{tr(\lambda ^{-1}\overline{\lambda }S^{-1}\lambda ^{-1}\overline{%
\lambda }S^{-1})} \\
=a_{l}\sqrt{tr((I+\widetilde{\lambda }\overline{\mathbf{\Lambda }}^{\mathrm{T%
}}\overline{\mathbf{\Lambda }})^{-1}(I+\widetilde{\lambda }\overline{\mathbf{%
\Lambda }}^{\mathrm{T}}\overline{\mathbf{\Lambda }})^{-1})},%
\end{array}%
\end{equation*}%
with $\widetilde{\lambda }$ being defined in (\ref{lamdatilde}). Note that%
\begin{equation*}
\begin{array}{l}
\lambda _{\mathrm{\min }}(M_{1})\lambda _{\mathrm{\max }}(M_{2})\leq \lambda
_{\mathrm{\min }}(M_{1})tr(M_{2}) \\
\leq tr(M_{1}M_{2})\leq \lambda _{\mathrm{\max }}(M_{1})tr(M_{2}),%
\end{array}%
\end{equation*}%
for two positive definite matrices $M_{1}$ and $M_{2}$ \cite{Athans}. Since $%
\overline{\mathbf{\Lambda }}^{\mathrm{T}}\overline{\mathbf{\Lambda }}\geq
\eta I,$ one has that%
\begin{equation*}
I+\widetilde{\lambda }\overline{\mathbf{\Lambda }}^{\mathrm{T}}\overline{%
\mathbf{\Lambda }}\geq (1+\widetilde{\lambda }\eta )I,
\end{equation*}%
i.e.,%
\begin{equation*}
tr((I+\widetilde{\lambda }\overline{\mathbf{\Lambda }}^{\mathrm{T}}\overline{%
\mathbf{\Lambda }})^{-1})\leq n\lambda _{\mathrm{\max }}((I+\widetilde{%
\lambda }\overline{\mathbf{\Lambda }}^{\mathrm{T}}\overline{\mathbf{\Lambda }%
})^{-1})=\frac{n}{1+\widetilde{\lambda }\eta }.
\end{equation*}%
It can be further derived that%
\begin{equation*}
a\leq a_{l}\sqrt{\lambda _{\mathrm{\max }}((I+\widetilde{\lambda }\overline{%
\mathbf{\Lambda }}^{\mathrm{T}}\overline{\mathbf{\Lambda }})^{-1})\frac{n}{1+%
\widetilde{\lambda }\eta }}\leq \frac{a_{l}\sqrt{n}}{1+\widetilde{\lambda }%
\eta }.
\end{equation*}%
Therefore, if (\ref{conditiona}) is satisfied, we will have $a<1.$
\end{proof}

An important question here is whether $\lambda $ can be tuned to possibly
reduce $\zeta _{\infty }.$ This motivates us to derive%
\begin{equation*}
\frac{d\zeta _{\infty }}{d\lambda }=\frac{1}{(1-a)^{2}}\left[ (1-a)\frac{db}{%
d\lambda }+b\frac{da}{d\lambda }\right] .
\end{equation*}%
It is not obvious to ascertain whether $\frac{d\zeta _{\infty }}{d\lambda }$
is positive or negative when $\lambda \in (0,1)$, because $a$ and $b$ are
parameterized by $\lambda $ in a complicated way. To circumvent this
difficulty, we set our sights lower and define a normalized sequence by
dividing both sides of (\ref{sequence}) with $b$: $\overline{\zeta }_{t}=a%
\overline{\zeta }_{t-1}+1$, $\overline{\zeta }_{0}=\frac{b_{0}}{b},$ in
which,%
\begin{equation*}
\overline{\zeta }_{t}=\frac{\zeta _{t}}{b},\overline{\zeta }_{t-1}=\frac{%
\zeta _{t-1}}{b}.
\end{equation*}%
If we define $\overline{e}_{t-N}=\frac{e_{t-N}}{b},$ it holds that $%
\left\Vert \overline{e}_{t-N}\right\Vert \leq \overline{\zeta }_{t-N}.$ We
then have the following results.

\begin{theorem}
\label{th1co}Assume $A_{L}$ is Schur stable, $\lambda \in (0,1)$, $\overline{%
\mu }$ and $\mu $ take different positive values. If (\ref{conditiona}) is
satisfied, the normalized bounding sequence $\left\{ \overline{\zeta }%
_{t}\right\} $ converges exponentially to the following value $\overline{%
\zeta }_{\infty }=\frac{1}{1-a}$. Moreover, $\overline{\zeta }_{\infty }$
decreases monotonically when one increases $\lambda .$
\end{theorem}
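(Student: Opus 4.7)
The plan is to split the statement into the two asserted properties and handle them sequentially. For the exponential convergence of $\{\overline{\zeta}_{t}\}$, I would observe that, under condition (\ref{conditiona}), Theorem \ref{th1copy} already gives $a<1$, so the normalized recursion $\overline{\zeta}_{t}=a\overline{\zeta}_{t-1}+1$ is a scalar affine contraction. Iterating it yields $\overline{\zeta}_{t}=a^{t}\overline{\zeta}_{0}+\sum_{k=0}^{t-1}a^{k}$, and both terms converge geometrically; the limit is the unique fixed point $\overline{\zeta}_{\infty}=1/(1-a)$, with exponential rate $a$. This is the easy part and mirrors the corresponding step in the proof of Theorem \ref{th1copy}.

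The main work is the monotonicity of $\overline{\zeta}_{\infty}$ in $\lambda$. Since $\overline{\zeta}_{\infty}=1/(1-a)$ and $a\in(0,1)$, it suffices to show $\frac{da}{d\lambda}<0$. I would first simplify the expression for $a$ by using $\overline{S}=\lambda^{-1}\overline{\lambda}I+\overline{\mathbf{\Lambda }}^{\mathrm{T}}\overline{\mathbf{\Lambda }}$ and $\widetilde{\lambda}=\lambda/\overline{\lambda}$ to rewrite
\begin{equation*}
\lambda^{-1}\overline{\lambda}\,\overline{S}^{-1}=\bigl(I+\widetilde{\lambda}\,\overline{\mathbf{\Lambda }}^{\mathrm{T}}\overline{\mathbf{\Lambda }}\bigr)^{-1}.
\end{equation*}
The matrix inside the inverse is symmetric positive definite, so its spectral norm equals its largest eigenvalue, which is the reciprocal of the smallest eigenvalue of $I+\widetilde{\lambda}\,\overline{\mathbf{\Lambda }}^{\mathrm{T}}\overline{\mathbf{\Lambda }}$. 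Writing $\eta_{\min}$ for the smallest eigenvalue of $\overline{\mathbf{\Lambda }}^{\mathrm{T}}\overline{\mathbf{\Lambda }}$ (which is positive since $(A,C)$ is observable and $N\geq n$), I obtain
\begin{equation*}
a=\frac{a_{l}}{1+\widetilde{\lambda}\,\eta_{\min}}.
\end{equation*}

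From here the conclusion follows by a direct differentiation. Using (\ref{derivimportant}), I have $\frac{d\widetilde{\lambda}}{d\lambda}=\overline{\mu}/\overline{\lambda}^{2}>0$ for $\overline{\mu}>0$, and therefore
\begin{equation*}
\frac{da}{d\lambda}=-\frac{a_{l}\,\eta_{\min}}{\bigl(1+\widetilde{\lambda}\,\eta_{\min}\bigr)^{2}}\cdot\frac{\overline{\mu}}{\overline{\lambda}^{2}}<0.
\end{equation*}
Consequently $\frac{d\overline{\zeta}_{\infty}}{d\lambda}=\frac{1}{(1-a)^{2}}\frac{da}{d\lambda}<0$, so $\overline{\zeta}_{\infty}$ decreases monotonically in $\lambda$.

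The only subtle point I expect is the reduction of the operator norm $\|\lambda^{-1}\overline{\lambda}\,\overline{S}^{-1}\|$ to an eigenvalue expression that is manifestly monotone in $\widetilde{\lambda}$: one must argue that increasing $\widetilde{\lambda}$ pushes $I+\widetilde{\lambda}\,\overline{\mathbf{\Lambda }}^{\mathrm{T}}\overline{\mathbf{\Lambda }}$ upwards in the Loewner order, hence pulls its inverse downwards, and hence shrinks its spectral norm. Everything else is bookkeeping using $\widetilde{\lambda}$ as the natural monotone reparametrization. Once this is in place, the result is immediate, and no extra assumption beyond those already in Theorem \ref{th1copy} is needed.
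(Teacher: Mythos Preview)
Your overall strategy coincides with the paper's: dispatch convergence by the scalar contraction argument (the paper simply refers back to Theorem \ref{th1copy}), then establish $\frac{da}{d\lambda}<0$ by reparametrizing through $\widetilde{\lambda}$ and invoking $\frac{d\widetilde{\lambda}}{d\lambda}=\overline{\mu}/\overline{\lambda}^{2}>0$, whence $\frac{d\overline{\zeta}_{\infty}}{d\lambda}=\frac{1}{(1-a)^{2}}\frac{da}{d\lambda}<0$.

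There is, however, one concrete slip. You treat $\|\cdot\|$ as the spectral norm and thereby obtain the closed form $a=a_{l}/(1+\widetilde{\lambda}\,\eta_{\min})$. The paper's norm is the Frobenius norm: in the proof of Theorem \ref{th1copy} it computes $a=a_{l}\sqrt{tr\bigl((I+\widetilde{\lambda}\,\overline{\mathbf{\Lambda}}^{\mathrm{T}}\overline{\mathbf{\Lambda}})^{-2}\bigr)}$, and the $\sqrt{n}$ in condition (\ref{conditiona}) is another tell. Hence your explicit expression for $a$ and the derivative you compute from it are not the quantities actually in play. The paper's route is to let $\{\gamma_{i}\}$ denote the eigenvalues of $(I+\widetilde{\lambda}\,\overline{\mathbf{\Lambda}}^{\mathrm{T}}\overline{\mathbf{\Lambda}})^{-1}$, write $a=a_{l}\sqrt{\gamma_{1}^{2}+\cdots+\gamma_{n}^{2}}$, and observe that as $\widetilde{\lambda}$ increases the eigenvalues of $I+\widetilde{\lambda}\,\overline{\mathbf{\Lambda}}^{\mathrm{T}}\overline{\mathbf{\Lambda}}$ increase, so each $\gamma_{i}$ (and each $\gamma_{i}^{2}$) decreases---precisely your Loewner-order remark---giving $\frac{da}{d\lambda}<0$. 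Once you swap the spectral-norm reduction for this trace/eigenvalue expression, your argument and the paper's are identical; no other change is needed.
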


\begin{proof}
The convergence of $\left\{ \overline{\zeta }_{t}\right\} $ to $\overline{%
\zeta }_{\infty }$ can be proved similarly to that of Theorem \ref{th1copy}.
Note that for a given $M>0$, if $\gamma >0$ is an eigenvalue of $M$ with the
associated eigenvector $\tau \neq 0,$ i.e., $M\tau =\gamma \tau ,$ then $%
\gamma ^{2}$ is an eigenvalue of $M^{2}$ with the associated eigenvector $%
\tau \neq 0$, and $\gamma ^{-1}$ is an eigenvalue of $M^{-1}$ with the
associated eigenvector $\tau \neq 0$. These can be easily verified as follows%
\begin{equation*}
M^{2}\tau =\gamma M\tau =\gamma ^{2}\tau ,\text{ }M^{-1}\tau =\gamma
^{-1}\tau .
\end{equation*}%
Denote $\left\{ \gamma _{1},\gamma _{2},\cdots ,\gamma _{n-1},\gamma
_{n}\right\} $ as the eigenvalues of the matrix $\overline{\lambda }%
S^{-1}=(I+\widetilde{\lambda }\overline{\mathbf{\Lambda }}^{\mathrm{T}}%
\overline{\mathbf{\Lambda }})^{-1}.$ Then we know that%
\begin{equation*}
tr((I+\widetilde{\lambda }\overline{\mathbf{\Lambda }}^{\mathrm{T}}\overline{%
\mathbf{\Lambda }})^{-2})=\gamma _{1}^{2}+\gamma _{2}^{2}+\cdots +\gamma
_{n-1}^{2}+\gamma _{n}^{2}.
\end{equation*}%
Note from (\ref{derivimportant}), we have%
\begin{equation*}
\frac{d\widetilde{\lambda }}{d\lambda }=\frac{\overline{\mu }}{\overline{%
\lambda }^{2}}>0.
\end{equation*}%
Therefore, eigenvalues of matrix $I+\widetilde{\lambda }\overline{\mathbf{%
\Lambda }}^{\mathrm{T}}\overline{\mathbf{\Lambda }}$ increase monotonically
when one increases $\lambda .$ In other words, eigenvalues of matrix $(I+%
\widetilde{\lambda }\overline{\mathbf{\Lambda }}^{\mathrm{T}}\overline{%
\mathbf{\Lambda }})^{-1}$ and $(I+\widetilde{\lambda }\overline{\mathbf{%
\Lambda }}^{\mathrm{T}}\overline{\mathbf{\Lambda }})^{-2}$ decrease
monotonically when $\lambda $ is increased$,$ i.e,%
\begin{equation*}
\begin{array}{l}
\frac{da}{d\lambda }=a_{l}\frac{d\left\Vert \lambda ^{-1}\overline{\lambda }%
S^{-1}\right\Vert }{d\lambda } \\
=a_{l}\frac{d\left\Vert (I+\widetilde{\lambda }\overline{\mathbf{\Lambda }}^{%
\mathrm{T}}\overline{\mathbf{\Lambda }})^{-1}\right\Vert }{d\lambda } \\
=a_{l}\frac{d\sqrt{tr((I+\widetilde{\lambda }\overline{\mathbf{\Lambda }}^{%
\mathrm{T}}\overline{\mathbf{\Lambda }})^{-2})}}{d\lambda }<0.%
\end{array}%
\end{equation*}%
The fact that $\overline{\zeta }_{\infty }$ is a monotonically decreasing
function of $\lambda $ can be obtained by recognizing%
\begin{equation*}
\frac{d\overline{\zeta }_{\infty }}{d\lambda }=\frac{1}{(1-a)^{2}}\frac{da}{%
d\lambda }<0.
\end{equation*}
\end{proof}

\section{\label{exam}Illustrative Example}

We illustrate the results with the land-based vehicle example in \cite%
{Bitmead2007}, where detailed information about the system model can be
found. In the above model, the first two components of the states are the
northerly and easterly positions and the last two are the northerly and
easterly velocities, respectively. Different from \cite{Bitmead2007}, the
heading of the vehicle is assumed to be unknown. Assume the input is zero
and the sampling period $T=0.5$ s. We design a pre-estimator with%
\begin{equation*}
L=\left[
\begin{array}{ccc}
1.2466 & 0 & 0 \\
0 & 0.8627 & 0.4358 \\
0.6759 & 0 & 0 \\
0 & 0.0090 & 0.8535%
\end{array}%
\right]
\end{equation*}%
so that the set of eigenvalues of $A_{L}=A-LC$ is $\left\{
0.1519,0.6015,0.1419\pm 0.0236i\right\} $.

Firstly, for system (\ref{augmented}), we compare the metamorphic MHE in
Section \ref{copy} (with $\lambda $ taking the value of $0.1$, $0.5$) with
the the unbiased FIR filter in \cite{Zhao2017}. As such, we choose $Q$ $%
=I_{4}$; $R=0.5I_{3}$; $M=10I_{7}$, and the rolling horizon length to be $20$
in both metamorphic MHE and FIR. For metamorphic MHE, the weightings on the
initial state estimation error within the rolling horizon is taken to be the
steady-state solution to the ARE (\ref{AREaugment}), with $\lambda $ taking
different values of $0.1$ and $0.5$, respectively. Assume that elements of $%
w_{k}$ and $\nu _{k}$ are uniformly distributed numbers between $\left[
-0.1,0.1\right] $ and $\left[ -0.25,0.25\right] $, respectively. The initial
state guess of the system (\ref{augmented}) is a realization of zero mean,
normally distributed random variable with unit covariance while the true
value is $[5\cdot \mathbf{1}_{4},10\cdot \mathbf{1}_{4}]$. The evolutions of
the northerly position estimation error for both MHE and FIR\ are
illustrated in Figure \ref{final1}. It can be seen that (i) the metamorphic
MHE strategy with $\lambda =0.5$ renders better performance than the
strategy with $\lambda =0.1$ in the first few sampling instants; afterwards,
their performance are nearly distinguishable (because of the effect of the
pre-estimator); (ii) the performance of FIR is roughly comparable with that
of MHE, and the FIR seems to give slightly worse performance than MHE after
the $25$-th sampling instant. In Figure \ref{final1}, before the $21$-st
sampling instant, both metamorphic MHE and FIR work as batch least-squares
estimators with MHE using and FIR not using the initial guess, respectively.

\begin{figure}[tbph]
\centering
\epsfig{figure= 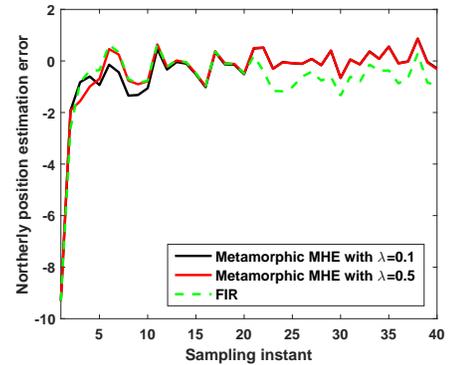,scale=.45, bb=110 182 500 535}
\caption{The evolutions of estimation error}
\label{final1}
\end{figure}






Secondly, using the same example, we compare the MHE strategy in Section \ref%
{main2} with the the unbiased FIR filter in \cite{Zhao2017}. For the
trajectories of the estimation errors, we notice similar patterns with those
shown in Figure \ref{final1}. Thus, further illustrations are not presented
here. Instead, we conduct some numerical analysis of the estimation errors
for both strategies. As such, we select $\mu =0.15$ and $\overline{\mu }=0.1$
so that the condition in Theorem \ref{th1} is maintained for all $\lambda
\in (0,1).$ For MHE, assume that both $x_{0}$ (the true state) and $%
\overline{x}_{0}$ (the initial guess) are a realization of zero mean,
normally distributed random variable with unit covariance. We consider two
cases, where elements of the noise vectors are bounded by: (1) $\left[
-0.01,0.01\right] $, (2) $\left[ -0.025,0.025\right] $, respectively. We
take the initial step as the 0-th time instant. Starting from the $20$-th
time instant, we calculate the estimation error for $100$ steps when $%
\lambda $ takes the value of $0,0.25$, $0.5$, and $0.75$, respectively. When
$\mu =0.15$ and $\overline{\mu }=0.1,$ it can be verified that from
Proposition \ref{prop1} that $(1-\lambda )\overline{\mu }<\lambda (1-\mu )$
for $0.25$, $0.5$, $0.75,$ therefore, the MHE has a forgetting towards the
pre-estimate. We repeat the above simulation for 1000 random scenarios. The
average root mean square errors (ARMSE) for the MHE and FIR strategies, are
shown in Table \ref{table1}. From Table \ref{table1}, we can conclude: (i)
with the increase of $\lambda $, MHE gradually improves performance; (ii)
the unbiased FIR filter performs roughly as well as MHE with $\lambda \in
\left( 0.25,0.5\right) $, although theoretical comparisons between the two
are hard to obtain.
\begin{table}[t]
\caption{ARMSE comparison of different scenarios ($\protect\mu =0.15$, $%
\overline{\protect\mu }=0.1$)}
\label{table1}\renewcommand{\arraystretch}{1.3} \centering%
\begin{tabular}{cccccc}
\hline
case & $\lambda =0$ & $\lambda =0.25$ & $\lambda =0.5$ & $\lambda =0.75$ &
FIR \\
1 & 0.388 & 0.162 & 0.113 & 0.091 & 0.121 \\
2 & 0.3901 & 0.163 & 0.116 & 0.095 & 0.151%
\end{tabular}%
\end{table}

\section{\label{conclusion}Conclusion}

We have proposed a MHE methodology with pre-estimation and normalized
forgetting/discounting effects. This is achieved by the introduction of a
cost formulation parameterized by a design parameter $\lambda \in \lbrack
0,1]$. We have examined the idea in two general MHE frameworks. When $%
\lambda =0,$ the proposed technique reduces to the existing estimator. When $%
\lambda $ is increased, the technique has a more aggressive forgetting
effect towards the old data, in a normalized sense. Therefore, when one
increases $\lambda $, the proposed framework gradually improves the
estimation performance, based on the pre-estimator. Extension of the method
to the nonlinear case poses no conceptual difficulty, although establishment
of theoretical results would be more involved.

\section{Acknowledgment}

The authors thank the reviewers and Editors for their constructive comments
which helped to improve this paper's quality. The first author is grateful
to Prof. Graham Goodwin and A/Prof Maria Seron at University of Newcastle,
Australia, for sharing illuminating thoughts on constrained estimation. He
would also like to thank Dr. Shunyi Zhao at Jiangnan University, China, for
helpful discussions on finite impulse response filtering.

\end{document}